\documentclass[letterpaper,12pt]{amsart}
\textwidth=16.00cm 
\textheight=22.00cm 
\topmargin=0.00cm
\oddsidemargin=0.00cm 
\evensidemargin=0.00cm 
\headheight=0cm 
\headsep=0.5cm

\textheight=630pt

\usepackage{latexsym,array,delarray,amsthm,amssymb,epsfig}


\theoremstyle{plain}
\newtheorem{thm}{Theorem}[section]
\newtheorem{lemma}[thm]{Lemma}
\newtheorem{prop}[thm]{Proposition}
\newtheorem{cor}[thm]{Corollary}

\newtheorem*{thm*}{Theorem}
\newtheorem*{lemma*}{Lemma}
\newtheorem*{prop*}{Proposition}
\newtheorem*{cor*}{Corollary}
\newtheorem*{conj*}{Conjecture}

\theoremstyle{definition}
\newtheorem{defn}[thm]{Definition}
\newtheorem{ex}[thm]{Example}

\newtheorem{alg}[thm]{Algorithm}

\theoremstyle{remark}


\newcommand{\rr}{\mathbb{R}}




\newcommand{\ind}{\mbox{$\perp \kern-5.5pt \perp$}}

\begin{document}

\title{Polyhedral Combinatorics of UPGMA Cones}
\author{Ruth Davidson}
\author{Seth Sullivant}
\maketitle

\begin{abstract}
Distance-based methods such as UPGMA (Unweighted Pair Group Method 
with Arithmetic Mean)  continue to play a significant role in 
phylogenetic research. We use polyhedral combinatorics to analyze 
the natural subdivision of the positive orthant induced by 
classifying the input vectors according to tree topologies returned
by the algorithm. The partition lattice  informs the study
of UPGMA trees.   We give a closed form for the extreme rays of 
UPGMA cones on $n$ taxa, and compute the normalized volumes of 
the UPGMA cones for small $n$.
\end{abstract}


\section{Introduction}

The UPGMA algorithm (Unweighted Pair Group Method with Arithmetic Mean) 
\cite{Sokal1963} is an agglomerative
tree reconstruction method, that takes as input ${n \choose 2}$ 
pairwise distances (dissimilarities) 
between $n$ taxa and returns a rooted, equidistant tree with 
these $n$ taxa as the leaves.
UPGMA is a greedy heuristic that attempts to compute the Euclidean
projection onto the space of all equidistant tree metrics \cite{Fahey2008}.  
The UPGMA algorithm subdivides the positive orthant 
$\rr^{n(n-1)/2}_{\geq 0}$ into regions
based on which combinatorial type of tree is returned by the algorithm.
The goal of this paper is to study the geometry of these regions in order to 
understand both how the regions relate to one
 another as well as the performance
of the algorithm.

UPGMA has poor performance if the data is tree-like but does not follow
a molecular clock.  In spite of this limitation, we find UPGMA
an interesting algorithm to study because it is one of the few phylogenetic
reconstruction methods that directly returns a rooted tree on a collection
of species.  One motivation for studying the UPGMA algorithm was the
work of Aldous \cite{Aldous}, where it was observed that rooted trees
that have been constructed from data do not typically have the same
underlying statistics as familiar speciation models such as the Yule 
process.  This raises the question of whether or not the Yule process 
is flawed, or the trees that have been constructed are biased because of
taxa selection, or inherent bias in the reconstruction methods. 
We believe that analyzing the partition of data space induced by a tree
reconstruction method can give some insight into the latter problem:
if regions corresponding to some tree shapes are inherently larger
than others, this indicates that the algorithm might favor those shapes
in the presence of noise or model misspecification of the equidistant 
assumption.

With these motivating problems in mind, we study the decomposition of
space induced by the UPGMA algorithm.
For a given
binary phylogenetic $X$-tree $T$ (that is, with leaf labels $X$ but without edge
lengths), the region of $\mathcal{P}(T) \subseteq \rr^{n(n-1)/2}_{\geq 0}$ of dissimilarity
maps for which the algorithm returns 
the phylogenetic $X$-tree $T$ is a union of finitely many polyhedral cones, 
one for each ranking
function of the interior nodes of $T$.  We give explicit polyhedral descriptions
of the cones including facet defining inequalities and extreme rays, for all $T$ and
all $n$.  In particular, each cone has $O(n^{3})$ facet defining inequalities
but exponentially many extreme rays.  
We compute the spherical volumes of the regions $\mathcal{P}(T)$ for $n \leq 7$. 
These volumes give a measure of the proportion of dissimilarity maps for which UPGMA returns
a given combinatorial type of tree.
In particular, our computations seem to 
indicate that highly unbalanced trees have small volume UPGMA cones compared to more
balanced trees.
Our computation of spherical volumes builds on the Monte Carlo strategy in \cite{NJSphere}.


\section{Ranked Phylogenetic Trees and the UPGMA Algorithm}

The UPGMA method is an agglomerative tree reconstruction method that takes as an input ${n \choose 2}$ pairwise distances between a set of taxa $X$ and returns a rooted equidistant tree metric  on $X$.  In this section, we review necessary background
on ranked phylogenetic trees and the lattice of set partitions as they
pertain to describing the UPGMA algorithm.  We refer the reader to
\cite{Felsenstein} and \cite{SS} for background on phylogenetics.

\begin{defn}
Let $X$ be a finite set.  A \emph{phylogenetic $X$-tree} is a tree $T$ with leaves 
bijectively labeled by the
set $X$.  A phylogenetic $X$-tree is rooted if it has a distinguished root node $\rho$.
It is \emph{binary} if every interior vertex that is not a leaf has degree $3$
except for the root $\rho$, which has degree $2$.    
\end{defn}

Throughout this paper, unless stated otherwise, we assume that a \textit{tree}
 $T$ \textit{on $n$ taxa} is a rooted binary phylogenetic $X$- tree where $X =  [n]$.  In a rooted binary phylogenetic $X$ tree, $\rho$ is not labeled by an element of $X$.
 
A vertex $v \in V(T)$ is a \textit{descendant} of $u \in V(T)$ if the path 
from $\rho$ to $v$ includes $u$.   This relation induces a partial order on the vertices of $T$ and we can write $u \leq_{T} v$.  Let $V^{\circ}$ denote the set of
interior (i.e. nonleaf) vertices of $T$.  
A \textit{rank function} on $T$ is a bijection 
$r : V^{\circ} \to \{ 1,2, ... , |V^{\circ}| \}$ satisfying 
$u \leq_{T} v \rightarrow r(u) \leq r(v)$. 
The number of rank functions on $T$ is :
$
{|V^{\circ}| ! }/{\prod_{v \in V^{\circ}} |\mathrm{de}(v)}|
$
where $\mathrm{de}(v)$ denotes the set of descendants of $v$ in the set $V^{\circ}$ \cite{StanleyVolumeI}.
Note that $v \leq_{T} v$, so that the number of descendants of $v$ will include $v$ itself. 
A tree with a rank function is called a \textit{ranked phylogenetic tree}.

The lattice of set partitions provides a useful alternate description of ranked phylogenetic trees.  See \cite{StanleyVolumeI} for
background and terminology for the theory of partially ordered sets.  
Let $\Pi_{n}$ consist of all partitions of a set with $n$ elements.
For simplicity, we identify this underlying set as $[n] = \{1,2, \ldots, n\}$.
Partitions are unordered, and consist of unordered elements.
The shorthand $A_{1} | \ldots | A_{k}$ denotes a partition
with $k$ parts.  For example $12|345$ is shorthand for the partition
$\{\{1,2\}, \{3,4,5\} \}$.

Partitions in $\Pi_{n}$ are ordered by refinement, so 
$A_{1}| \ldots |A_{k} \leq B_{1} | \ldots | B_{\ell}$  if and only
if for each $i \in [k]$ there exists a $j \in [\ell]$ satisfying $A_{i} \subseteq B_{j}$.
Every maximal chain in the lattice
of set partitions corresponds to a ranked phylogenetic tree.
Indeed, consider a maximal chain 
\[
C = 1|2| \cdots | n = \pi_{n} \lessdot \pi_{n-1} \lessdot \cdots \lessdot \pi_{2} \lessdot  \pi_{1} = 12 \cdots n 
\]
in $\Pi_{n}$. We use $\lessdot$ to denote a covering relation in the partial
order $\Pi_{n}$, and we use the convention that $\pi_{i}$ is always a partition with 
$i$ parts.

 Given $\pi_{i} \in C$, we write $\pi_{i} = \lambda^{i}_{{1}}| \lambda^{i}_{{2}} | \cdots  | \lambda^{i}_{{i}}$.  When $\pi_{i} \lessdot \pi_{i-1 }$, there are exactly two blocks $\lambda^{i}_{{j}}, \lambda^{i}_{{k}}$ that are joined in $\pi_{i-1 }$ but distinct in $\pi_{i}$.  If $v \in V^{\circ}$ where $r(v) = n - i$, then $\pi_{i - 1}$ joins the two blocks in $\pi_{i}$ that correspond to the subtrees of $T$ induced by the child nodes of $v$.  

The UPGMA algorithm constructs a rooted ranked phylogenetic $X$ tree from
a dissimilarity map $d$, as well as an equidistant tree metric $\delta$ 
which approximates $d$.  The algorithm works as follows:

\begin{alg}[UPGMA Algorithm]

\begin{itemize}

\medskip

\item Input: a dissimilarity map $d \in \rr^{ {n(n-1)/2}}_{\geq 0}$ on $X$.
\item Output: a maximal chain $C$ in the partition lattice $\Pi_{n}$ and an 
equidistant tree metric $\delta$.
\item Initialize $\pi_{n} = 1|2| \cdots | n$, and set $d^{n} = d$.
\item For $i = n-1, \ldots, 1$ do
\begin{itemize}
\item  From partition $\pi_{i+1} = \lambda^{i+1}_{1} | \cdots | \lambda^{i+1}_{i+1}$
and distance vector $d^{i+1} \in \rr^{(i+1)i/2}_{\geq 0}$
choose $j, k$ be so that $d^{i+1}(\lambda^{i+1}_{j}, \lambda^{i+1}_{k})$ is minimized.
\item  Set $\pi_{i}$ to be the partition obtained from $\pi_{i+1}$ by
 merging $\lambda^{i+1}_{j}$ and $ \lambda^{i+1}_{k}$ and leaving all other
 parts the same. Let $\lambda^{i}_{i} = \lambda^{i+1}_{j} \cup \lambda^{i+1}_{k}$.
\item  Create new distance $d^{i} \in \rr^{i(i-1)/2}_{\geq 0}$ by
$d^{i}(\lambda, \lambda') = d^{i+1}(\lambda, \lambda')$ if $\lambda, \lambda'$ are
both parts of $\pi_{i+1}$ and
$$
d^{i}(\lambda, \lambda^{i}_{i})
=  \frac{  |\lambda^{i+1}_{j}|}{ |\lambda^{i}_{i}|} 
d^{i+1}( \lambda, \lambda^{i+1}_{j} )  + 
\frac{|\lambda^{i+1}_{k}|}{ |\lambda^{i}_{i}|} 
d^{i+1}( \lambda, \lambda^{i+1}_{k} )
$$
\item  For each $x \in \lambda^{i+1}_{j}$ and $y \in\lambda^{i+1}_{k}$, 
set $\delta(x,y) = d^{i+1}(\lambda^{i+1}_{j}, \lambda^{i+1}_{k})$
\end{itemize}
\item Return:  Chain $C = \pi_{n} \lessdot \cdots \lessdot \pi_{1}$ and
equidistant metric $\delta$.
\end{itemize}
\end{alg}

Note that that step which recalculates distances, the weighted average 
$$
d^{i}(\lambda, \lambda^{i}_{i})
=  \frac{  |\lambda^{i+1}_{j}|}{ |\lambda^{i}_{i}|} 
d^{i+1}( \lambda, \lambda^{i+1}_{j} )  + 
\frac{|\lambda^{i+1}_{k}|}{ |\lambda^{i}_{i}|} 
d^{i+1}( \lambda, \lambda^{i+1}_{k} )
$$
is used to determine the new distance.  This is simply a computationally efficient
strategy to compute the average of distances
\begin{equation} \label{eq:averaged}
d^{i}(\lambda, \lambda')  =  \frac{1}{|\lambda| \cdot |\lambda'|} \sum_{x \in \lambda, y \in \lambda'} d(x,y)
\end{equation}
a formula we will make use of later.

\begin{ex}

Let $\mathbf{d} = (1,   2,    1.8,   1.7,  2,  2.6,  3.1,   2.4,  2.6,  1.2) \in \rr^{5(5-1)/2}_{\geq 0}$, be a dissimilarity map on $5$ taxa.
  
The UPGMA algorithm performs the following steps, where an underline is used to denote the 
smallest value in the present metric.

\[
\begin{matrix} 
12 &   13 & 14 & 15 & 23 & 24 & 25 & 34 & 35 & 45 \\  
(\underline{1}, &  2,   & 1.8, &  1.7, & 2, & 2.6, & 3.1, &  2.4, & 2.6, & 1.2) \end{matrix}
\]
\[
\begin{matrix} 
12, 3  & 12, 4  & 12, 5   & 34  & 35 & 45 \\ 
(2, & 2.2, & 2.4, & 2.4, & 2.6, & \underline{1.2}) 
\end{matrix}
\]
\[
\begin{matrix}
 12, 3 & 12, 45 & 3, 45 \\ 
 \underline{2} & 2.3 & 2.5 \end{matrix}
\]
\[
\begin{matrix} 123, 45 \\ \underline{2.367} \end{matrix}
\]
where 
\[
2.367  =  \left(\frac{ |12|} { |12| + |3|}\right) (2.3) + \left( \frac{ |3|} { |12| + |3|} \right) (2.5)
\]

The resulting rooted metric tree produced by the UPGMA algorithm is displayed in
Figure \ref{fig:treemetric}.

\begin{figure}[h]  \label{fig:treemetric}
\includegraphics[width=6cm]{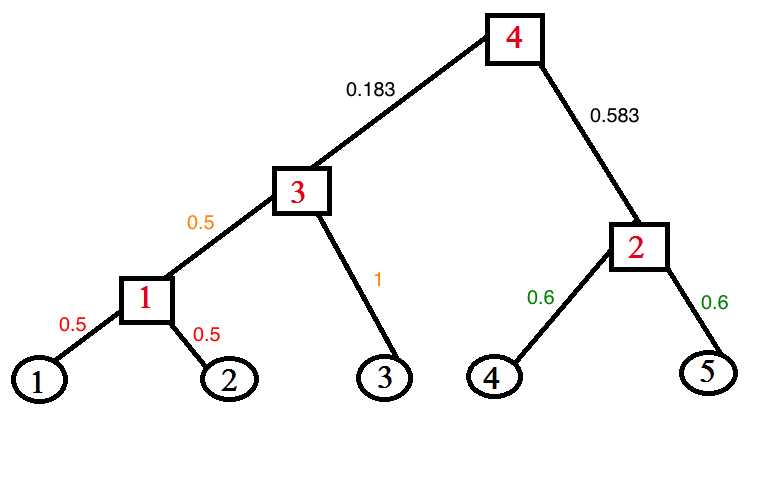} 
 \begin{caption}{The tree metric $\delta$}\end{caption}
  \end{figure}
The corresponding chain in the lattice of partitions $\Pi_{5}$ is
   \[
  C =    1|2|3|4|5 \lessdot 3|4|5 |12 \lessdot 3| 12 |45 \lessdot 45 |123  \lessdot 12345.
      \]
\end{ex}
 
\medskip



\section{UPGMA regions and UPGMA cones}

The UPGMA algorithm takes as input a dissimilarity map 
$d \in \mathbb{R}^{n(n-1)/2}_{\geq 0}$ and returns a rooted equidistant
tree metric.
If we ignore the
resulting metric tree that is output, and only record the rooted tree
computed at each step of the algorithm, the UPGMA algorithm
produces a rooted tree and a ranking function of the internal nodes
corresponding to precisely one maximal chain in the partition lattice.  Our goal is 
to understand the set of dissimilarity maps $d$, such that the UPGMA
returns a rooted tree $T$, or equivalently a given chain $C$ in the partition
lattice $\Pi_{n}$.  For a given leaf-labeled rooted tree $T$ let 
$\mathcal{P}(T) \subseteq \rr^{n(n-1)/2}_{\geq 0}$
denote the closure of the set of dissimilarity maps such that the UPGMA
algorithm returns $T$.  
The set $\mathcal{P}(T)$ is called the \textit{UPGMA region} associated to the tree $T$.
Similarly, for a maximal chain $C$ in $\Pi_{n}$, let $\mathcal{P}(C)\subseteq \rr^{n(n-1)/2}_{\geq 0}$ denote the 
closure of the
set of dissimilarity maps such that the UPGMA algorithm returns the chain
$C$.  


Our goal in this Section is to describe the sets $\mathcal{P}(T)$ and $\mathcal{P}(C)$.
Clearly
$\mathcal{P}(T)  =  \cup \mathcal{P}(C)$   
where the union is over all maximal chains in $\Pi_{n}$ whose associated
tree is $T$.

\begin{thm}\label{thm:main}
For each chain $C \in \Pi_{n}$ the 
set $ \mathcal{P}(C)$ is a pointed polyhedral cone.  The cone
has $O(n^{3})$ facet defining inequalities, and exponentially many
extreme rays.  Each covering relation in the chain $C$ determines a
collection of facet defining inequalities for $\mathcal{P}(C)$.  
Each element of the chain $C$ determines a collection of extreme
rays of $\mathcal{P}(C)$.
\end{thm}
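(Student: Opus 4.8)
The plan is to exploit the fact that the UPGMA algorithm is governed entirely by \emph{linear} comparisons. Every branching decision it makes is a comparison between two of the averaged dissimilarities $d^{i}(\lambda,\lambda')$, and by~\eqref{eq:averaged} each $d^{i}(\lambda,\lambda')$ is a fixed linear functional of the input $d$ with nonnegative rational coefficients. Fix the maximal chain $C = \pi_{n}\lessdot\cdots\lessdot\pi_{1}$. I would first identify $\mathcal{P}(C)$ with the set of $d\in\rr^{n(n-1)/2}_{\geq0}$ satisfying, for each covering relation $\pi_{i+1}\lessdot\pi_{i}$ (say it merges the blocks $\lambda^{i+1}_{j},\lambda^{i+1}_{k}$ of $\pi_{i+1}$) and each other pair of blocks $\lambda,\lambda'$ of $\pi_{i+1}$, the inequality
\[
\frac{1}{|\lambda^{i+1}_{j}|\,|\lambda^{i+1}_{k}|}\sum_{x\in\lambda^{i+1}_{j},\,y\in\lambda^{i+1}_{k}}d(x,y)\ \leq\ \frac{1}{|\lambda|\,|\lambda'|}\sum_{x\in\lambda,\,y\in\lambda'}d(x,y).
\]
The inclusion of $\mathcal{P}(C)$ in this polyhedron is clear, since a map on which UPGMA returns $C$ satisfies each such inequality and these are closed conditions. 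For the reverse inclusion one shows, by induction on the step index $i$, that if all of these inequalities hold strictly then the reduced partitions and reduced distances produced by the algorithm at every stage are the ones dictated by $C$, so UPGMA returns $C$; hence the interior of the polyhedron is contained in $\mathcal{P}(C)$, which is in turn contained in the polyhedron. The polyhedron is full-dimensional: take the equidistant tree metric $\delta$ of the ranked tree encoded by $C$ with distinct internal-node heights; since in an ultrametric all distances between two disjoint clusters agree, UPGMA merges clusters in order of the rank of their divergence node, so it returns $C$ and with distinct heights every comparison is strict, placing $\delta$ in the interior. Therefore the polyhedron equals the closure of its interior, the two containments give equality with $\mathcal{P}(C)$, and since every defining inequality is homogeneous, $\mathcal{P}(C)$ is a polyhedral cone.

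Pointedness is then immediate: $\mathcal{P}(C)\subseteq\rr^{n(n-1)/2}_{\geq0}$ and the nonnegative orthant contains no line, so neither does $\mathcal{P}(C)$. For the facet count I would tally the defining inequalities: the $\binom{n}{2}$ nonnegativity constraints together with the $\binom{i+1}{2}-1$ comparison constraints attached to the covering relation $\pi_{i+1}\lessdot\pi_{i}$, summing to $\binom{n}{2}+\sum_{i=1}^{n-1}\bigl(\binom{i+1}{2}-1\bigr)=\binom{n}{2}+\binom{n+1}{3}-(n-1)=O(n^{3})$ inequalities, hence at most $O(n^{3})$ facets. To see that each covering relation ``determines a collection of facet-defining inequalities,'' I would group the comparison constraints by the covering relation they arise from and determine, by a routine polyhedral computation (an inequality is facet-defining exactly when it is not implied by the others, equivalently when some point of $\mathcal{P}(C)$ lies in the relative interior of the face it cuts out), which comparison constraints and which nonnegativity constraints are irredundant; the surviving ones partition the facet set of $\mathcal{P}(C)$ along the covering relations of $C$.

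For the extreme rays I would construct, for each element $\pi_{\ell}$ of the chain $C$, an explicit family $R_{\ell}$ of candidate rays: each is a nonnegative-integer dissimilarity map supported on a set of pairs prescribed by $\pi_{\ell}$ together with auxiliary \emph{transversal} data recording how the finer partitions $\pi_{\ell+1},\dots,\pi_{n}$ of $C$ sit inside the blocks of $\pi_{\ell}$, with its nonzero values read off from the ranks at which the relevant pairs first merge along $C$. For each candidate I would (a)~verify membership in $\mathcal{P}(C)$ directly from the inequality description, and (b)~verify that it spans an extreme ray by checking that the constraints active at it have rank $n(n-1)/2-1$. Since the transversal choices at different levels multiply, $\bigcup_{\ell}R_{\ell}$ is exponentially large in $n$, which establishes both ``exponentially many extreme rays'' and the assertion that each $\pi_{\ell}\in C$ determines a collection of extreme rays. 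That the $R_{\ell}$ exhaust the extreme rays — the closed form announced in the introduction — is sharper than Theorem~\ref{thm:main} requires, and I would treat it separately.

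The main obstacle is the extreme-ray half; once the linearity observation is in hand, the cone description, pointedness, and the $O(n^{3})$ facet bound are essentially formal. The delicate points are: (i)~choosing the right normal form for the members of $R_{\ell}$ so that the corank-one property of the active-constraint matrix can be proved uniformly in $n$ and in the combinatorics of $C$, rather than checked case by case; (ii)~the bookkeeping needed to see that distinct transversal choices give distinct rays, so the count is genuinely exponential, and to describe the $R_{\ell}$ cleanly in terms of the partition lattice; and (iii)~identifying precisely which of the $O(n^{3})$ defining inequalities are redundant, which is what makes the facet/covering-relation correspondence exact. I expect (i) to be where the real work lies.
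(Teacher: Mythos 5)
Your first half—identifying $\mathcal{P}(C)$ with the homogeneous system of nonnegativity constraints plus the averaged-distance comparisons, deducing that it is a pointed cone from its containment in the orthant, and tallying $\binom{n}{2}+\sum_{t=2}^{n}\bigl(\binom{t}{2}-1\bigr)=O(n^{3})$ defining inequalities—is correct and is essentially the paper's route (Propositions \ref{prop:ineqs} and \ref{prop:O(n^3)}); the paper even notes that all but one nonnegativity constraint is redundant, so your count is if anything slightly generous but still $O(n^{3})$.

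The extreme-ray half, however, is where the content of the theorem lives, and there your proposal has a genuine gap: the candidate rays $R_{\ell}$ are never actually pinned down, and the normal form you do suggest is wrong. You propose nonzero values ``read off from the ranks at which the relevant pairs first merge along $C$,'' but the vectors supported on a traversal of $\pi_{\ell}$ that are extreme are exactly those assigning to the chosen pair between blocks $\lambda_{a},\lambda_{b}$ of $\pi_{\ell}$ the weight $|\lambda_{a}|\,|\lambda_{b}|$ (the induced vectors $v(F)$ of Theorem \ref{thm:extremerays}). These weights are forced: to reach corank one you need the $\binom{\ell}{2}-1$ comparison constraints at the step merging two blocks of $\pi_{\ell}$ to be tight, i.e.\ all block-averaged distances of $\pi_{\ell}$ equal, and on a traversal's support that equality is precisely proportionality to the block-size products; with any other weights either membership in $\mathcal{P}(C)$ or the rank-$\bigl(\binom{n}{2}-1\bigr)$ condition fails. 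So the step you defer as ``where the real work lies''---item (i)---is not a routine uniform computation but the actual theorem, and without it neither ``exponentially many extreme rays'' nor ``each element of the chain determines a collection of extreme rays'' is established. It is also worth noting that the paper never performs an active-constraint rank check at all: it argues by induction along the chain, showing that $\mathbf{1}^{s}$ is the unique extreme ray with nonzero merged coordinate (Lemma \ref{lem:allones}), that the remaining rays are preimages under the UPGMA averaging map $A(C,C')$ (Proposition \ref{prop:project}), and that because this map is a coordinate substitution those preimages are products of simplices whose vertices are exactly the traversal vectors (Lemma \ref{lemma:coordsub}); the exponential count then follows by enumerating traversals (Proposition \ref{prop:exponentiallymany}). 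Your direct certificate strategy could be made to work once the correct vectors $v(F)$ are written down, but as it stands the construction, the membership check, the corank-one verification, and the distinctness count are all missing.
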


We refer the reader to \cite{Ziegler} for background material on
polyhedral geometry.
To prove Theorem \ref{thm:main}, we will provide a more general
result for the description of cones associated to partial chains.
A partial chain $C$ is a sequence 
$$ \pi_{s} \lessdot \pi_{s-1} \lessdot \cdots \lessdot \pi_{t}$$
for some $n \geq  s \geq t \geq 1$.  
The fact that these are covering relations guarantees
that at each step, $\pi_{i+1} \lessdot \pi_{i}$ we are simply
joining a pair of parts together.  This means that any partial
chain $C$ can be intermediate information that is calculated 
between steps $s$ and $t$ of the UPGMA algorithm.  

For a partial chain
$C$, let $\mathcal{P}(C)$ denote the set of all dissimilarity maps
$d \in \rr^{s(s-1)/2}_{\geq 0}$ which the UPGMA algorithm could
produce on steps $s$ through $t$ of the algorithm.
The coordinates in the space $\rr^{s(s-1)/2}$ are the
$s(s-1)/2$ distances $d(\lambda^{s}_{j}, \lambda^{s}_{k})$.

\begin{prop}\label{prop:ineqs}
Let $C$ be a partial chain in $\Pi_{n}$.
Let $\mathcal{P}(C) \subseteq \rr^{s(s-1)/2}$ be the set
of dissimilarity maps for which steps $s$ through $t$  of the 
UPGMA algorithm return the partial chain $C$.  
For each covering relation $\pi_{i} \lessdot \pi_{i-1}$
let $\lambda^{i}_{j(i)}$ and $\lambda^{i}_{k(i)}$ be the pair of parts of $\pi_{i}$
that are joined in $\pi_{i-1}$.  Then $\mathcal{P}(C)$ is the solution
to the following system of linear inequalities:

$$d(\lambda^{s}_{j}, \lambda^{s}_{k}) \geq 0 \mbox{ for all } j,k$$
$$\mbox{ for } i = s, \ldots, t-1,  \mbox{  and for all pairs } j,k \neq j(i),k(i) $$
$$
\frac{1}{|\lambda^{i}_{j(i)}|  |\lambda^{i}_{k(i)}|}
\sum_{ \lambda^{s}_{j} \subseteq \lambda^{i}_{j(i)}, 
\lambda^{s}_{k} \subseteq \lambda^{i}_{k(i)}} |\lambda^{s}_{j}| |\lambda^{s}_{k}| d(\lambda^{s}_{j},\lambda^{s}_{k})  \leq
\frac{1}{|\lambda^{i}_{j}|  |\lambda^{i}_{k}|}
\sum_{ \lambda^{s}_{j} \subseteq \lambda^{i}_{j}, 
\lambda^{s}_{k} \subseteq \lambda^{i}_{k}} |\lambda^{s}_{j}| |\lambda^{s}_{k}| d(\lambda^{s}_{j},\lambda^{s}_{k})  
$$
\end{prop}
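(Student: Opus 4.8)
The plan is to track what the UPGMA algorithm does on steps $s$ through $t$ and translate each greedy choice into a linear inequality on the input vector $d \in \rr^{s(s-1)/2}$. The key observation is that the distances $d^i$ appearing at intermediate stages of the algorithm are, by the averaging formula \eqref{eq:averaged}, linear functions of the original coordinates $d(\lambda^s_j, \lambda^s_k)$: indeed $d^i(\lambda, \lambda') = \frac{1}{|\lambda||\lambda'|} \sum_{x \in \lambda, y \in \lambda'} d(x,y)$, and when $\lambda = \lambda^i_a$ and $\lambda' = \lambda^i_b$ are blocks of $\pi_i$ (each a disjoint union of blocks of $\pi_s$), this regroups as $\frac{1}{|\lambda^i_a||\lambda^i_b|} \sum_{\lambda^s_j \subseteq \lambda^i_a, \lambda^s_k \subseteq \lambda^i_b} |\lambda^s_j||\lambda^s_k| d(\lambda^s_j, \lambda^s_k)$, which is exactly the expression appearing in the proposed inequalities. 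So first I would verify by induction on the step index $i$ (descending from $s$) that the quantity $d^i(\lambda^i_a, \lambda^i_b)$ computed by the algorithm equals this averaged quantity; the inductive step is just the content of the remark following the algorithm statement, checking that the weighted-average recursion for $d^i(\lambda, \lambda^i_i)$ reproduces the flat average.

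Next, I would spell out the correspondence between the algorithm's choices and the inequalities. At the step passing from $\pi_i$ to $\pi_{i-1}$, the algorithm merges the pair $\lambda^i_{j(i)}, \lambda^i_{k(i)}$ precisely because $d^i(\lambda^i_{j(i)}, \lambda^i_{k(i)}) \leq d^i(\lambda^i_j, \lambda^i_k)$ for every other pair of blocks $j, k$ of $\pi_i$. Substituting the linear expressions from the previous paragraph for both sides gives exactly the displayed inequality indexed by $(i; j,k)$. Running over all covering relations $i = s, \ldots, t+1$ (note: the excerpt writes the range as $i = s, \ldots, t-1$, matching its convention that $\pi_i \lessdot \pi_{i-1}$ and the merge at step $i$ produces $\pi_{i-1}$) and all admissible pairs $j,k$ collects all the inequalities in the statement; the nonnegativity constraints $d(\lambda^s_j, \lambda^s_k) \geq 0$ record that the input lies in the nonnegative orthant. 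This shows $\mathcal{P}(C)$ is contained in the stated polyhedron.

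For the reverse containment, I would argue that if $d$ satisfies all the listed inequalities, then running steps $s$ through $t$ of UPGMA on $d$ does return the partial chain $C$: at each stage the inequalities for that covering relation guarantee the pair $\lambda^i_{j(i)}, \lambda^i_{k(i)}$ achieves the minimum, so the algorithm is forced to make exactly the merges prescribed by $C$. One subtlety is that the minimum need not be unique — there may be ties — so strictly speaking $\mathcal{P}(C)$ as defined is a \emph{closure}, and the open region of $d$ for which UPGMA \emph{uniquely} returns $C$ is cut out by the strict versions of these inequalities; the polyhedron defined by the non-strict inequalities is the closure of that open set, so the two agree. I would state this tie-breaking/closure point carefully but it is routine: a finite intersection of closed halfspaces is closed, and it is the closure of its interior whenever the interior is nonempty, which it is here (the metric from a generic equidistant tree realizing $C$ lies in the interior).

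The main obstacle is bookkeeping rather than mathematical depth: one must be scrupulous about which blocks of $\pi_i$ are unions of which blocks of $\pi_s$, and about the indexing of the double sums, so that the linear functionals on the two sides of each inequality are written in terms of the ambient coordinates $d(\lambda^s_j, \lambda^s_k)$ rather than intermediate ones. The only genuinely conceptual point — that intermediate UPGMA distances are linear, hence the greedy comparisons are linear inequalities — is immediate from \eqref{eq:averaged}; everything else is careful unwinding of definitions and the closure argument above. I would therefore organize the proof as: (1) the linearity lemma for $d^i$ via induction, (2) the forward inclusion by reading off the comparisons, (3) the reverse inclusion plus the closure/tie-breaking remark.
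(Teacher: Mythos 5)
Your proposal is correct and follows essentially the same route as the paper: express the intermediate distances $d^{i}$ linearly in the original coordinates via the flat-average formula (the paper does this by "applying the averaging formula twice"), then read off each greedy minimization as the stated linear inequalities. Your added closure/tie-breaking discussion is a fine refinement, though the paper sidesteps it by defining $\mathcal{P}(C)$ for a partial chain as the set of maps the algorithm \emph{could} produce, so ties are already included and the non-strict inequalities describe the set directly.
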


Note that if $s > t $  we only need the nonnegativity constraint $d(\lambda^{s}_{j(s)}, \lambda^{s}_{k(s)}) \geq 0$, as 
the other inequalities  $d(\lambda^{s}_{j}, \lambda^{s}_{k}) \geq 0$ follow from $d(\lambda^{s}_{j(s)}, \lambda^{s}_{k(s)}) \leq d(\lambda^{s}_{j}, \lambda^{s}_{k})$.

\begin{proof}
At step $i$ of the UPGMA algorithm, we choose the pair of $\lambda^{i}_{j(i)}$
and $\lambda^{i}_{k(i)}$ to merge such that $d^{i}(\lambda^{i}_{j(i)},\lambda^{i}_{k(i)})$
is minimized.  Using the formula 
$$
d^{i}(\lambda^{i}_{j},\lambda^{i}_{k}) =
\frac{1}{|\lambda^{i}_{j}|  |\lambda^{i}_{k}|}
\sum_{x \in \lambda^{i}_{j}, 
y \in \lambda^{i}_{k}}  d(x,y)
$$
twice shows that
$$
d^{i}(\lambda^{i}_{j},\lambda^{i}_{k}) =
\frac{1}{|\lambda^{i}_{j}|  |\lambda^{i}_{k}|}
\sum_{ \lambda^{s}_{j} \subseteq \lambda^{i}_{j}, 
\lambda^{s}_{k} \subseteq \lambda^{i}_{k}} |\lambda^{s}_{j}| |\lambda^{s}_{k}| d(\lambda^{s}_{j},\lambda^{s}_{k}).
$$
This yields precisely the inequalities in the statement of the proposition
at step $i$.
\end{proof}

\begin{prop}\label{prop:O(n^3)}
Given a maximal chain $C \in \Pi_{n}$, there are $O(n^{3})$ facet defining inequalities for $\mathcal{P}(C)$. 
\end{prop}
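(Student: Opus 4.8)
The plan is to bound the number of inequalities produced by Proposition \ref{prop:ineqs} when $C$ is a full maximal chain, i.e.\ when $s = n$ and $t = 1$. The inequalities come in two families: the single nonnegativity constraint $d(\lambda^{n}_{j(n)}, \lambda^{n}_{k(n)}) \geq 0$ (by the remark following Proposition \ref{prop:ineqs}, only one nonnegativity inequality is needed, since all the others follow from the comparison inequalities at step $n$), and, for each covering relation $\pi_{i} \lessdot \pi_{i-1}$ with $i$ ranging from $n$ down to $2$, the comparison inequalities asserting that the merged pair $\lambda^{i}_{j(i)}, \lambda^{i}_{k(i)}$ has minimal averaged distance among all pairs of blocks of $\pi_{i}$. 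At step $i$ the partition $\pi_{i}$ has $i$ blocks, hence $\binom{i}{2}$ pairs, so at most $\binom{i}{2} - 1$ comparison inequalities. Summing over $i$ from $2$ to $n$ gives at most $1 + \sum_{i=2}^{n}\left(\binom{i}{2} - 1\right)$ inequalities in the raw system.

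First I would carry out this sum: $\sum_{i=2}^{n}\binom{i}{2} = \binom{n+1}{3}$ by the hockey-stick identity, which is $\frac{(n+1)n(n-1)}{6} = O(n^{3})$, and subtracting the $n-1$ ``$-1$'' terms and adding the single nonnegativity constraint only changes lower-order terms. So the raw linear system describing $\mathcal{P}(C)$ has $O(n^{3})$ inequalities. Since $\mathcal{P}(C)$ is cut out by this system, the number of its facet-defining inequalities is at most the number of inequalities in the system, hence also $O(n^{3})$. This already proves the stated upper bound, which is all the proposition claims (it asserts ``there are $O(n^{3})$ facet defining inequalities'', an upper bound).

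For completeness I would also remark on the lower-order bookkeeping: one must make sure that passing from the block-level description to the coordinates $d(\lambda^{n}_{j},\lambda^{n}_{k}) = d(x,y)$ on $\rr^{n(n-1)/2}$ does not inflate the count — but each comparison inequality is a single linear inequality in these coordinates (the averaging formula in Proposition \ref{prop:ineqs} expands each $d^{i}$ as one linear functional), so the count of inequalities is unchanged. One could optionally sharpen the constant by noting that at step $i$ one of the $\binom{i}{2}$ pairs is the minimizing pair and another coincidence-reductions may occur, but none of this affects the $O(n^{3})$ conclusion.

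The main point requiring care — though it is not really an obstacle — is simply being precise that ``facet defining inequalities'' are a subset of the inequalities in any valid defining system, so that counting the latter bounds the former; and that each covering relation contributes $O(i^{2}) = O(n^{2})$ inequalities, with $n-1$ covering relations, giving the cubic bound even without the hockey-stick refinement. I expect the proof to be short: it is essentially the observation that $\sum_{i=2}^{n}\binom{i}{2} = \binom{n+1}{3}$ applied to the inequality description already established in Proposition \ref{prop:ineqs}.
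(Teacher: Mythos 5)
Your proposal is correct and follows essentially the same route as the paper: count $\binom{i}{2}-1$ comparison inequalities per covering relation and sum via the hockey-stick identity to get $\binom{n+1}{3}$ up to lower-order terms. If anything, your explicit remark that facet-defining inequalities form a subset of any defining system (so the count is an upper bound) is slightly more careful than the paper, which states the sum as the exact number of facets without justifying irredundancy.
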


\begin{proof}
At step $t$, there are ${t \choose 2}$ ways to merge two blocks of $\pi_{t}$, and  the pair of parts $d(\lambda^{t}_{j(t)}, \lambda^{t}_{k(t)})$ merged at step $t$ can be paired with ${t \choose 2} - 1$ other pairs of parts.  So  ${t \choose 2} - 1$ new inequalities are introduced at step $t$.  An elementary identity for binomial coefficients tells us that for $a, b \geq 0$,  $\sum_{r = b}^{a} { r \choose b} = { a + 1 \choose b + 1}$. Thus there are
\[
\sum_{t = 2}^{n}  ({ t \choose 2} - 1) =  {n + 1 \choose 3} - n +1
\]
facet defining inequalities.
\end{proof}

Now we provide a description of the extremal rays of the cones of 
partial chains $\mathcal{P}(C)$, for partial chains starting with the bottom 
element $\pi_{n} = 1|2|\cdots | n$.  The polyhedral description 
of the cones $\mathcal{P}(C)$ for more general 
partial chains is used in the proof of the main cases of interest.

\begin{defn}
Given a partition $\pi_{k} =  \lambda_{1} | \lambda_{2} | \cdots | \lambda_{k} \in \Pi_{n}$ 
a \emph{traversal} of $\pi_{k}$ is a subset $F  \subset {[n] \choose 2}$ of size
${k \choose 2}$, where each element of $F$ is a pair $\{p,p'\} \in \pi$ satisfying $p \in \lambda, p' \in \lambda'$.  There is precisely one such pair $p, p'$ for every pair of parts $\lambda, \lambda'$ of $\pi_{k}$.
\end{defn}

For example, the partition $12|3|45$ has $2^{2}\cdot (2 \cdot 1) \cdot (2 \cdot 1) = 16$ traversals. 

\begin{defn}
Let $\pi_{k}  =  \lambda_{1} | \lambda_{2} | \cdots | \lambda_{k} \in \Pi_{n}$.  Let $F$ be a traversal of $\pi_{k}$.  The \textit{induced vector} of $F$, denoted $v(F)$, is the vector in $\rr^{n \choose 2}$ such that 

\begin{enumerate}

\item $v(F)_{ij } = 0$ if the pair $i,j$ is not in the traversal $F$.
\item $i,j \in F$, $v(F)_{ij} =|\lambda_{k(i)}||\lambda_{k(j)}|$ where $i \in \lambda_{k(i)}$ and $j \in \lambda_{k(j)}$.

\end{enumerate}

\end{defn}

Consider the traversal $\{ \{1,3\}, \{1,4\}, \{3,5\} \}$ of the partition $12|3|45$.  This traversal induces the vector $(0,2,4,0,0,0,0,0,2,0)$.

\begin{thm}\label{thm:extremerays}
Let $C = \pi_{n} \lessdot \pi_{n-1} \lessdot \cdots \lessdot \pi_{t}$ be a 
grounded partial chain in $\Pi_{n}$.  
Then $\mathcal{P}(C)$ is a cone with extreme rays given by the set of vectors
\begin{eqnarray*}
&  &  
\{ e(k,l) :  k, l \mbox{ are not in the same part of the partition } \pi_{t} \} \\
& \bigcup & 
\bigcup_{i = t-1}^{n}  \{ v(F) \ : \ F \ \text{is a traversal of $\pi_{i}$} \ \}
\end{eqnarray*}
\end{thm}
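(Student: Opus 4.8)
The plan is to prove the two inclusions between $\mathcal{P}(C)$ and the cone $\sigma$ generated by the listed vectors, using the inequality description from Proposition~\ref{prop:ineqs} as the definition of $\mathcal{P}(C)$.

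First I would verify that every listed vector lies in $\mathcal{P}(C)$, i.e.\ satisfies all the inequalities of Proposition~\ref{prop:ineqs}. For the vectors $e(k,l)$ with $k,l$ in distinct parts of $\pi_t$, the point is that at every step $i$ from $n-1$ down to $t$, the quantity $d^i(\lambda^i_j,\lambda^i_k)$ (which by the proof of Proposition~\ref{prop:ineqs} is a nonnegatively weighted sum of the coordinates $d(\lambda^s_j,\lambda^s_k)$ over refining pairs) is zero for \emph{every} pair that has not yet been merged, because the single nonzero coordinate $kl$ only shows up inside a block of $\pi_t$; so all the merge inequalities hold (with equality, in fact) and nonnegativity is obvious. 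For the induced vectors $v(F)$ of a traversal of $\pi_i$: here the key computation is that $v(F)$ is designed precisely so that $d^m(\lambda,\lambda') = 1$ for \emph{every} pair of parts $\lambda,\lambda'$ of $\pi_m$, for each $m$ with $i \geq m \geq t$ — this is where the weights $|\lambda_{k(i)}||\lambda_{k(j)}|$ in the definition of $v(F)$ pay off, since the averaging formula \eqref{eq:averaged} then produces exactly $1$. On the other hand for $m$ with $n \geq m > i$ the partition $\pi_m$ refines $\pi_i$, each part of $\pi_m$ sits inside a part of $\pi_i$, and $F$ contributes exactly one edge joining each pair of $\pi_i$-blocks, which forces $d^m$ of two $\pi_m$-blocks lying inside distinct $\pi_i$-blocks to be positive while the two $\pi_m$-blocks merged at step $m$ lie in a common $\pi_i$-block and have $d^m$-distance $0$; so again all merge inequalities at steps $m > i$ hold, and a short check handles the steps $m \le i$ where the minimum is genuinely attained. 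This shows $\sigma \subseteq \mathcal{P}(C)$.

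For the reverse inclusion $\mathcal{P}(C) \subseteq \sigma$, I would argue by induction on the length $n - t$ of the partial chain, peeling off the last covering relation. The base case $t = n$ (empty chain) is the statement that the positive orthant $\rr^{n\choose 2}_{\ge 0}$ is generated by the standard basis vectors $e(k,l)$, which are exactly the traversals of $\pi_n = 1|2|\cdots|n$. For the inductive step, suppose the claim holds for $C' = \pi_n \lessdot \cdots \lessdot \pi_{t+1}$, and let $C = C' ,\lessdot, \pi_t$. Then $\mathcal{P}(C)$ is obtained from $\mathcal{P}(C')$ by imposing the single new batch of inequalities $d^{t+1}(\lambda^{t+1}_{j(t+1)},\lambda^{t+1}_{k(t+1)}) \le d^{t+1}(\lambda^{t+1}_j,\lambda^{t+1}_k)$ for all other pairs $j,k$; equivalently $\mathcal{P}(C) = \mathcal{P}(C') \cap H$ where $H$ is the intersection of finitely many halfspaces, each of the form $\ell_{jk} \ge 0$ for a linear functional $\ell_{jk}$ that is the difference of two $d^{t+1}$-distances. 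The ray generators of $\mathcal{P}(C') \cap H$ are, by the standard cone-intersection (Fourier–Motzkin / double description) mechanism, among: the generators of $\mathcal{P}(C')$ that already lie in $H$, together with, for each facet hyperplane $\ell_{jk} = 0$, positive combinations of pairs of generators of $\mathcal{P}(C')$ lying on opposite sides of it. The generators of $\mathcal{P}(C')$ are, by induction, the $e(k,l)$ with $k,l$ in distinct $\pi_{t+1}$-blocks together with the $v(F)$ for traversals $F$ of $\pi_i$, $i \ge t+1$. I would check: (i) a $v(F)$ with $F$ a traversal of $\pi_i$ for $i \ge t+1$ already satisfies all new inequalities (this is part of the forward computation above), so these survive unchanged; (ii) among the $e(k,l)$, those with $k,l$ in distinct $\pi_t$-blocks survive, while the ones with $k,l$ in the \emph{same} $\pi_t$-block (i.e.\ exactly one of $\lambda^{t+1}_{j(t+1)},\lambda^{t+1}_{k(t+1)}$ contains $k$ and the other contains $l$) violate the new inequality and must be replaced; (iii) the replacements are exactly the traversals of $\pi_t$: a traversal of $\pi_t$ is obtained from a traversal of $\pi_{t+1}$ by dropping the edge between the two merging blocks, and conversely each traversal $F$ of $\pi_{t+1}$ that fails the new inequality can be written (after rescaling) as a nonnegative combination of $v(F')$ for the traversal $F' = F \setminus \{\text{merging edge}\}$ of $\pi_t$ and the basis vector $e(k,l)$ of the dropped merging edge, which lives on the opposite side of the hyperplane. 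A dimension/minimality check then confirms no spurious generators are introduced.

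The main obstacle I expect is step (iii) of the inductive step: carefully proving that the new ray generators produced by intersecting with the batch of new halfspaces are \emph{precisely} the traversals of $\pi_t$ and nothing more — i.e.\ controlling the double-description step so that (a) every needed traversal of $\pi_t$ genuinely appears as a positive combination of two old generators straddling a facet, and (b) no extra, non-extreme or non-traversal rays sneak in from combining other pairs of old generators across the several new hyperplanes simultaneously. This requires identifying, for each traversal $F$ of $\pi_t$, which old generator it pairs with and across which hyperplane, and showing every other "straddling" combination either already lies in $\sigma$ or is redundant (not extreme). An alternative that sidesteps some of this bookkeeping is to prove $\mathcal{P}(C) \subseteq \sigma$ directly: given $d \in \mathcal{P}(C)$, run a greedy subtraction — repeatedly subtract the largest multiple of an appropriate traversal vector $v(F)$ (choosing $F$ adapted to the coarsest partition $\pi_i$ at which some merge inequality is tight for $d$) so that the residual still lies in $\mathcal{P}(C)$ but has strictly smaller support or one more tight inequality, terminating at a residual in the orthant that is a combination of the $e(k,l)$. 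I would try the inductive Fourier–Motzkin argument first since it also transparently yields the "each element of the chain determines a collection of extreme rays" phrasing, and fall back on the direct greedy argument if the redundancy analysis becomes unwieldy.
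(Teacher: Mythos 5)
Your forward inclusion (every listed vector satisfies the inequalities of Proposition \ref{prop:ineqs}) is essentially fine, modulo one slip: for a traversal $F$ of $\pi_i$ and a step $m>i$, it is not true that $d^m$ is positive on every pair of $\pi_m$-blocks lying in distinct $\pi_i$-blocks; what saves the merge inequality is that the pair merged at that step lies inside a single $\pi_i$-block, so its averaged distance is $0$ while all others are $\geq 0$. The genuine gap is step (iii) of your induction, which you yourself flag, and the specific mechanism you sketch for it does not work. First, a traversal of $\pi_t$ is \emph{not} obtained from a traversal of $\pi_{t+1}$ by deleting the merging edge: a traversal of $\pi_{t+1}$ has ${t+1 \choose 2}$ edges, so deleting one leaves ${t \choose 2}+t-1$ of them, not ${t \choose 2}$; one must also, for each other block $\mu$, discard one of the two edges joining $\mu$ to the merging blocks and change the weight of the survivor from $|\mu|\,|\lambda^{t+1}_{j(t+1)}|$ (or $|\mu|\,|\lambda^{t+1}_{k(t+1)}|$) to $|\mu|\,\bigl(|\lambda^{t+1}_{j(t+1)}|+|\lambda^{t+1}_{k(t+1)}|\bigr)$. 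Second, the vectors $v(F)$ for traversals $F$ of $\pi_{t+1}$ do not violate the new inequalities — they satisfy all of them with equality, since every averaged $d^{t+1}$-distance equals $1$ — so they are not the generators being ``replaced''; the generators cut off are the $e(k,l)$ with $k$ and $l$ in the two merging blocks. Third, and most seriously, you intersect $\mathcal{P}(C')$ with a batch of ${t+1 \choose 2}-1$ halfspaces at once: new extreme rays can lie on several of the new hyperplanes simultaneously and need not be positive combinations of just two old generators, and if instead you impose the halfspaces one at a time, the intermediate cones are not of the form $\mathcal{P}(\cdot)$ for any chain, so the induction hypothesis gives no handle on their exponentially many extreme rays and adjacencies. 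Nothing in the proposal supplies the control needed for parts (a) and (b) of your step (iii), and that control is the entire content of the theorem; your fallback greedy-subtraction argument would at best show the listed vectors generate $\mathcal{P}(C)$, not that they are precisely its extreme rays.

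For comparison, the paper's proof sidesteps this bookkeeping by inducting in the opposite direction, peeling off the \emph{first} covering relation $\pi_s\lessdot\pi_{s-1}$ rather than the last, so that the smaller cone lives in a lower-dimensional space. Lemma \ref{lem:allones} shows ${\bf 1}^{s}$ is an extreme ray and is the only one with a nonzero coordinate in the merged position, so all other extreme rays lie in the slice $\tilde{\mathcal{P}}(C)$; Proposition \ref{prop:project} identifies this slice as $A(C,C')^{-1}(\mathcal{P}(C'))$ intersected with the nonnegative orthant, where $C'$ is the final segment of the chain; and since the averaging map $A(C,C')$ is a coordinate substitution, Lemma \ref{lemma:coordsub} shows the extreme rays of this preimage are exactly the vertices of products of simplices lying over the extreme rays of $\mathcal{P}(C')$. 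Iterating, the lifts of ${\bf 1}^{u}$ are the traversal vectors $v(F)$ of $\pi_u$ (this is where the weights $|\lambda|\,|\lambda'|$ come from), and the lifts of the unit vectors at the base case $s=t$ are the $e(k,l)$ with $k,l$ in distinct parts of $\pi_t$. If you wish to keep your same-ambient-space induction, you need a substitute for this coordinate-substitution lifting lemma; otherwise I recommend adopting the projection/lifting route.
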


Note that $e(k,l)$ denotes the standard unit vector in $\rr^{n(n-1)/2}$
with a $1$ in the $k,l$ position and a $0$ elsewhere.

The remainder of this section consists of the proof of Theorem \ref{thm:extremerays}
and completes our description of the cones $\mathcal{P}(C)$.  The proof will
be broken into a number of pieces, and will work by induction on both $t$ and $n$.

Let ${\bf 1}^{t}$ denote the vector in $\rr^{t(t-1)/2}$ all of whose coordinates
are equal to one.  Note that ${\bf 1}^{n}$ is the induced vector of the single
traversal associated to the partition $1|2|\cdots | n$, which appears in 
every partial chain.

\begin{lemma}\label{lem:allones}
Let $C = \pi_{s} \lessdot \cdots \lessdot \pi_{t}$ be a partial chain in $\Pi_{n}$ 
with $s > t$.  Then 
\begin{enumerate}
\item 
${\bf 1}^{s}$ is an 
extreme ray of $\mathcal{P}(C)$ and
\item ${\bf 1}^{s}$  is the only 
extreme ray of $\mathcal{P}(C)$ that has a nonzero $(\lambda_{j(s)}^{s},\lambda_{k(s)}^{s})$ coordinate
where $(\lambda_{j(s)}^{s},\lambda_{k(s)}^{s})$ is the pair of parts  joined together
in the partition $\pi_{s-1}$. 
\end{enumerate}
\end{lemma}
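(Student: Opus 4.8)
The key observation is that for a partial chain $C = \pi_s \lessdot \cdots \lessdot \pi_t$ with $s > t$, Proposition~\ref{prop:ineqs} tells us that $\mathcal{P}(C)$ is cut out, among the linear inequalities coming from the covering relations, by the single nonnegativity constraint $d(\lambda^s_{j(s)}, \lambda^s_{k(s)}) \geq 0$ together with the ``minimality'' inequalities at step $s$, namely $d(\lambda^s_{j(s)}, \lambda^s_{k(s)}) \leq d(\lambda^s_j, \lambda^s_k)$ for every other pair, plus the inequalities from the remaining steps $s-1, \ldots, t$, none of which involves the coordinate $(\lambda^s_{j(s)}, \lambda^s_{k(s)})$ at all (since that pair has already been merged). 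So the whole cone $\mathcal{P}(C)$ is contained in the halfspace $d(\lambda^s_{j(s)}, \lambda^s_{k(s)}) \geq 0$, and an extreme ray with a strictly positive value in that coordinate must lie on the \emph{relative interior} with respect to that variable.

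First I would verify that $\mathbf{1}^s \in \mathcal{P}(C)$: with all distances equal to $1$, every averaged distance $d^i(\lambda, \lambda')$ computed by the algorithm is again $1$ (this is immediate from the weighted-average update, or from formula~(\ref{eq:averaged})), so at every step all pairs are tied for the minimum and in particular the pair prescribed by $C$ is a valid choice; thus $\mathbf{1}^s$ lies in the closure $\mathcal{P}(C)$. Next, for part~(2), suppose $v \in \mathcal{P}(C)$ is any vector with $v(\lambda^s_{j(s)}, \lambda^s_{k(s)}) = c > 0$. The step-$s$ inequalities force $v(\lambda^s_j, \lambda^s_k) \geq c$ for \emph{every} pair $j,k$, so $v \geq c\,\mathbf{1}^s$ coordinatewise. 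Then $v - c\,\mathbf{1}^s$ has nonnegative coordinates and a zero in the $(\lambda^s_{j(s)}, \lambda^s_{k(s)})$ slot; I would check that it still lies in $\mathcal{P}(C)$ --- this is where one uses that $\mathbf{1}^s$ satisfies every step-$s$-through-$t$ inequality with equality (each such inequality, being homogeneous linear and having $\mathbf{1}$ as a solution with equality since both sides average to $1$), so $\mathcal{P}(C)$ is invariant under adding multiples of $\mathbf{1}^s$; hence $v - c\,\mathbf{1}^s \in \mathcal{P}(C)$. Therefore $v = (v - c\,\mathbf{1}^s) + c\,\mathbf{1}^s$ expresses $v$ as a sum of two elements of the cone, and if $v$ spans an extreme ray this forces $v - c\,\mathbf{1}^s$ to be a nonnegative multiple of $v$; since that vector has a zero coordinate where $v$ is positive, the multiple must be $0$, i.e. $v = c\,\mathbf{1}^s$. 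This proves both that $\mathbf{1}^s$ is the unique extreme ray with nonzero $(\lambda^s_{j(s)}, \lambda^s_{k(s)})$ coordinate and --- taking the same argument and noting $\mathcal{P}(C)$ is a pointed cone properly containing $\{0\}$ and not contained in any coordinate hyperplane other than is allowed --- that $\mathbf{1}^s$ genuinely is an extreme ray rather than lying in the interior of a higher-dimensional face.

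To pin down extremality of $\mathbf{1}^s$ cleanly, I would argue directly: $\mathbf{1}^s$ lies on the $\binom{s}{2} - 1$ facet hyperplanes $d(\lambda^s_{j(s)}, \lambda^s_{k(s)}) = d(\lambda^s_j, \lambda^s_k)$ introduced at step $s$, and this family of equations already forces all coordinates indexed at ``level $s$'' to be equal; one then checks that any vector satisfying these plus the inequality system has the form $c\,\mathbf{1}^s + w$ with $w$ supported away from level-$s$ coordinates, and that the $\mathbf{1}^s$ direction is extreme in the face it determines. The main obstacle I anticipate is the bookkeeping in this last step --- making precise that once all level-$s$ distances are equalized, the remaining inequalities decouple the $\mathbf{1}^s$ direction from everything else --- but this is exactly the structural fact that the later covering relations $\pi_{s-1} \lessdot \cdots \lessdot \pi_t$ never see the merged coordinate, so it should reduce to the observation already recorded in the note following Proposition~\ref{prop:ineqs}.
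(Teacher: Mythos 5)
Your argument is correct, but it reaches the lemma by a genuinely different route than the paper. The paper's proof is a face-lattice argument: $\mathbf{1}^{s}$ satisfies every inequality of Proposition~\ref{prop:ineqs} with equality except the single nonnegativity constraint $d(\lambda^{s}_{j(s)},\lambda^{s}_{k(s)})\geq 0$, so (the cone being pointed) it spans a one-dimensional face, and since every extreme ray is an intersection of facet hyperplanes, any extreme ray not proportional to $\mathbf{1}^{s}$ must have that nonnegativity constraint active, which is part (2). You instead argue by an explicit decomposition inside the cone: if $v\in\mathcal{P}(C)$ has merged coordinate $c>0$, the step-$s$ minimality inequalities give $v\geq c\,\mathbf{1}^{s}$ coordinatewise, and because $\mathbf{1}^{s}$ is tight on every homogeneous inequality of Proposition~\ref{prop:ineqs}, the vector $v-c\,\mathbf{1}^{s}$ again lies in $\mathcal{P}(C)$; extremality of $v$ then forces $v=c\,\mathbf{1}^{s}$. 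This is more elementary (no appeal to the correspondence between extreme rays and sets of active facets) and works directly from the inequality description; the paper's version makes the ``all facets but one'' structure explicit, which is what it uses immediately afterwards to note that the vertex figure is a pyramid with apex $\mathbf{1}^{s}$. Two tightenings of your write-up: part (1) already follows from your own computation applied to $\mathbf{1}^{s}$ itself --- if $\mathbf{1}^{s}=a+b$ with $a,b\in\mathcal{P}(C)$ and merged coordinates $\alpha+\beta=1$, then $a\geq\alpha\mathbf{1}^{s}$ and $b\geq\beta\mathbf{1}^{s}$, and summing forces $a=\alpha\mathbf{1}^{s}$, $b=\beta\mathbf{1}^{s}$ --- so the hedged final paragraph is unnecessary; moreover the ``bookkeeping obstacle'' you anticipate there is vacuous, since every coordinate of $\mathbb{R}^{s(s-1)/2}$ is indexed by a pair of parts of $\pi_{s}$, so making all ${s \choose 2}-1$ step-$s$ inequalities tight already forces a vector to be a scalar multiple of $\mathbf{1}^{s}$; there is no residual $w$ supported on other coordinates.
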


\begin{proof}
First of all, all the inequalities of Proposition \ref{prop:ineqs} are satisfied
with equality by ${\bf 1}^{s}$ so that ${\bf 1}^{s} \in \mathcal{P}(C)$,
except for the single inequality $d(\lambda_{j(s)},\lambda_{k(s)}) \geq 0$, which is satisfied
strictly.  Hence the extreme ray ${\bf 1}^{s}$ is in the intersection of 
all the facet defining inequalities except for one.  Since $\mathcal{P}(C)$
is a pointed cone because it is contained in the positive orthant,
this implies that ${\bf 1}^{s}$ is an extreme ray.  This proves part (1).
Furthermore, since every extreme ray of a cone is the intersection
of some of its facet defining inequalities, every other extreme ray
must have the inequality $d(\lambda_{j(s)},\lambda_{k(s)}) \geq 0$ as an active inequality.
This proves part (2).
\end{proof}

Note that Lemma \ref{lem:allones} implies that if $s > t$, the 
vertex figure of $\mathcal{P}(C)$ is a pyramid with apex ${\bf 1}^{s}$.

Let $C = \pi_{s} \lessdot \cdots \lessdot \pi_{t}$ be a partial chain,
and $C'$ a partial chain obtained as a final segment of $C$, that
is, there is a $s < u \leq t$, such that
$C' = \pi_{u}\lessdot \cdots \lessdot \pi_{t}$.  The UPGMA
algorithm induces a natural linear map $A(C,C'): \rr^{s(s-1)/2} \rightarrow
\rr^{u(u-1)/2}.$  In particular, it is defined by
$$
(A(C,C') d) ( \lambda, \lambda')  =  \frac{1}{|\lambda| |\lambda'|}
\sum_{\mu, \mu' \in \pi_{s} \atop
\mu \subseteq \lambda, \mu' \subseteq \lambda'
}
|\mu| |\mu'| d(\mu,\mu')
$$
where $\lambda, \lambda'$ are parts of $\pi_{u}$.
Note, in particular, the quantity $d(\mu,\mu')$ only appears in the formula
for $(A(C,C') d) ( \lambda, \lambda')$, so that $A(C,C')$ is a
coordinate substitution map (Definition \ref{defn:coordsub})
when restricted to the coordinates $d(\mu, \mu')$
where $\mu, \mu'$ are in different parts of $\pi_{s}$.

With the preceding paragraph in mind, we
let $\tilde{\mathcal{P}}(C)$ denote the intersection of 
$P(C)$ with the hyperplane $\{ d: d(\lambda_{j(s)}, \lambda_{k(s)}) = 0 \}$.

\begin{prop}\label{prop:project}
Let $C = \pi_{s} \lessdot \cdots \lessdot \pi_{t}$ be a partial chain
and with final segment $C ' = \pi_{s-1} \lessdot \cdots \lessdot \pi_{t}$.  Then $A(C,C') : \tilde{\mathcal{P}}(C) \rightarrow \mathcal{P}(C')$
is surjective, and $\tilde{\mathcal{P}}(C) = A(C,C')^{-1}(\mathcal{P}(C')) \cap
\rr^{s(s-1)/2 -1}_{\geq 0}$.
\end{prop}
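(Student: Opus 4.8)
The plan is to prove the two claimed facts — surjectivity of $A(C,C')$ restricted to $\tilde{\mathcal{P}}(C)$, and the preimage identity — essentially by unwinding the definitions and using Proposition \ref{prop:ineqs}. The key observation is that the partial chain $C$ differs from $C'$ only in the extra covering relation $\pi_{s} \lessdot \pi_{s-1}$, which merges the parts $\lambda^{s}_{j(s)}, \lambda^{s}_{k(s)}$. The inequalities of Proposition \ref{prop:ineqs} defining $\mathcal{P}(C)$ are exactly: (i) the nonnegativity constraints on $\rr^{s(s-1)/2}$; (ii) the inequalities coming from the step-$s$ merge, namely $d(\lambda^{s}_{j(s)},\lambda^{s}_{k(s)}) \le d(\lambda^{s}_{j},\lambda^{s}_{k})$ for all other pairs; and (iii) the inequalities coming from steps $s-1$ down to $t+1$. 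After intersecting with the hyperplane $d(\lambda^{s}_{j(s)},\lambda^{s}_{k(s)}) = 0$, group (ii) collapses to the nonnegativity of the remaining coordinates (already implied by (i)), so $\tilde{\mathcal{P}}(C)$ is cut out by nonnegativity on the $s(s-1)/2 - 1$ surviving coordinates together with the inequalities of type (iii). The point is then that the type-(iii) inequalities, written in the $d^{s-1}$-coordinates via the averaging identity in the proof of Proposition \ref{prop:ineqs}, are precisely the pullbacks under $A(C,C')$ of the defining inequalities of $\mathcal{P}(C')$.

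Concretely, first I would record that $A(C,C')$ sends a vector $d$ on $\pi_{s}$ to the vector $d^{s-1}$ on $\pi_{s-1}$ obtained by one UPGMA merge step — this is the content of the displayed formula defining $A(C,C')$, combined with the identity $d^{i}(\lambda,\lambda') = \frac{1}{|\lambda||\lambda'|}\sum_{\mu\subseteq\lambda,\mu'\subseteq\lambda'}|\mu||\mu'|d(\mu,\mu')$ from the proof of Proposition \ref{prop:ineqs}. Second, I would check that for $d \in \tilde{\mathcal{P}}(C)$ the image $A(C,C')d$ lands in $\mathcal{P}(C')$: its coordinates are nonnegative since they are nonnegative combinations of nonnegative entries of $d$, and each defining inequality of $\mathcal{P}(C')$ (from a covering relation $\pi_{i}\lessdot\pi_{i-1}$ with $i \le s-1$) is literally one of the type-(iii) inequalities satisfied by $d$. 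This gives $A(C,C')(\tilde{\mathcal{P}}(C)) \subseteq \mathcal{P}(C')$ and the inclusion $\tilde{\mathcal{P}}(C) \subseteq A(C,C')^{-1}(\mathcal{P}(C')) \cap \rr^{s(s-1)/2-1}_{\geq 0}$.

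Third, for the reverse inclusion and surjectivity simultaneously, I would use that $A(C,C')$, restricted to the coordinates $d(\mu,\mu')$ with $\mu,\mu'$ in different parts of $\pi_{s}$ (equivalently, on the hyperplane $d(\lambda^{s}_{j(s)},\lambda^{s}_{k(s)})=0$), is a coordinate substitution map in the sense of Definition \ref{defn:coordsub}, as noted just before the proposition. So it admits a section: given any target vector $d' \in \mathcal{P}(C')$ one can lift it to some $d$ on $\pi_s$ with $d(\lambda^{s}_{j(s)},\lambda^{s}_{k(s)}) = 0$ and $A(C,C')d = d'$ — for instance by setting, for each pair of parts $\lambda,\lambda'$ of $\pi_{s-1}$ other than those merged at step $s-1$, one representative coordinate $d(\mu,\mu')$ equal to $d'(\lambda,\lambda')$ and the rest zero, and distributing the merged-pair coordinates of $d'$ across the relevant $\pi_s$-coordinates so the weighted averages match. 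Any such lift is nonnegative and satisfies all the type-(iii) inequalities (because $d'$ does and they correspond under $A(C,C')$), hence lies in $\tilde{\mathcal{P}}(C)$; this proves surjectivity and, applied to an arbitrary $d \in A(C,C')^{-1}(\mathcal{P}(C'))\cap\rr^{s(s-1)/2-1}_{\geq0}$, shows such $d$ already satisfies every defining inequality of $\tilde{\mathcal{P}}(C)$, giving the reverse inclusion.

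The main obstacle I anticipate is the bookkeeping in the surjectivity/section step: one must be careful that a chosen lift of $d' \in \mathcal{P}(C')$ genuinely satisfies the nonnegativity constraints and the step-$(s-1)$-through-$(t+1)$ inequalities of $\mathcal{P}(C)$, not merely that it maps to $d'$. This requires checking that the inequalities of $\mathcal{P}(C')$ pull back exactly — with the right weights $|\lambda^{s}_{j}||\lambda^{s}_{k}|$ — to inequalities in the original coordinates, which is where the two-fold application of the averaging identity from Proposition \ref{prop:ineqs} does the real work. Once the coordinate-substitution structure is pinned down precisely, both directions follow formally, so the essential step is setting up that dictionary cleanly rather than any hard estimate.
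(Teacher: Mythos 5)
Your proof is correct and follows essentially the paper's own route: both arguments rest on the observation that on the hyperplane $d(\lambda^{s}_{j(s)},\lambda^{s}_{k(s)})=0$ the step-$s$ inequalities become vacuous and the remaining defining inequalities of $\mathcal{P}(C)$ are exactly the pullbacks under $A(C,C')$ of the defining inequalities of $\mathcal{P}(C')$ together with nonnegativity. The only difference is cosmetic: for surjectivity the paper zeroes out the $(\lambda^{s}_{j(s)},\lambda^{s}_{k(s)})$ coordinate of a preimage taken from $\mathcal{P}(C)$, whereas you construct an explicit section (where the chosen representative coordinate should carry the value $\frac{|\lambda|\,|\lambda'|}{|\mu|\,|\mu'|}\,d'(\lambda,\lambda')$ rather than $d'(\lambda,\lambda')$, a scaling issue you already flag yourself).
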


\begin{proof}
Note that by definition of the UPGMA algorithm, the map 
$A(C,C') : \mathcal{P}(C) \rightarrow \mathcal{P}(C')$ is 
surjective.  If a vector $d^{s} \in \mathcal{P}(C)$,
then so is the vector 
$$d' = d^{s} -  d^{s}(\lambda^{s}_{j(s)}, \lambda^{s}_{k(s)})
e( \lambda^{s}_{j(s)}, \lambda^{s}_{k(s)}),$$ 
obtained by zeroing
out the $( \lambda^{s}_{j(s)}, \lambda^{s}_{k(s)})$ coordinate.
However, $A(C,C')d^{s} = A(C,C')d'$, which implies that 
$A(C,C') : \tilde{\mathcal{P}}(C) \rightarrow \mathcal{P}(C')$
is surjective.

To see that $\tilde{\mathcal{P}}(C) = A(C,C')^{-1}(\mathcal{P}(C')) \cap
\rr^{s(s-1)/2 -1}_{\geq 0}$,
note that the inequalities that describe 
$\tilde{\mathcal{P}}(C)$ are precisely the pullbacks of the
inequalities that describe $\mathcal{P}(C')$, plus nonnegativity
constraints, since none of the inequalities on
$\mathcal{P}(C)$ coming from the covering relation
$\pi_{s} \lessdot \pi_{s-1}$ are needed.
\end{proof}

\begin{defn}\label{defn:coordsub}
A linear transformation $\phi: \rr^n \rightarrow \rr^m$ is a \emph{coordinate
substitution} if for each of the coordinate vectors 
$e_i$, $\phi(e_i) = c_i e_{\alpha(i)}$ with $c_i > 0$,
where $\alpha: [n] \rightarrow [m]$.  That is, each coordinate maps
to a scaled version of another coordinate.
\end{defn} 

\begin{lemma}\label{lemma:coordsub}
Let $D \subseteq \rr^m$ be a polyhedral cone, $\phi: \rr^n \rightarrow
\rr^m$ be a coordinate substitution with associated map $\alpha$, and $C \subseteq \rr^n$ a polyhedral
cone such that $\phi(C) = D$.  Suppose that $C = \rr^n_{\geq 0} \cap
\phi^{-1}(D)$.
Let $V$ be the set of extreme
rays of $C$.  Then extreme rays of $D$ consist of all
vectors obtained by the following procedure:

For each extreme ray $\sum_j a_{j}e_j \in V$, consider 
all vectors of the form $\sum_j a_j/c_{\beta(j)}  e_{\beta(j)}$ ranging over
all functions $\beta :[m] \rightarrow [n]$ such that $\alpha(\beta(j)) = j$ for 
all $j$.
\end{lemma}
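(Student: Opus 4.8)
The plan is to prove Lemma~\ref{lemma:coordsub} by a direct double-inclusion argument on the set of extreme rays, using the defining hypothesis $C = \rr^n_{\geq 0} \cap \phi^{-1}(D)$ to control how preimages of rays behave. Write $\mathcal{B}$ for the set of vectors produced by the stated procedure; we must show $\mathcal{B}$ is exactly the set of extreme rays of $D$. First I would record the two structural facts we will lean on: (i) since $\phi$ is a coordinate substitution, for any $w = \sum_j b_j e_j \in \rr^m_{\geq 0}$ the fiber $\phi^{-1}(w) \cap \rr^n_{\geq 0}$ is a product of simplices (each coordinate $e_j$ of the target is ``fed'' by the coordinates in $\beta$-preimages $\alpha^{-1}(j)$, with the constraint $\sum_{i \in \alpha^{-1}(j)} c_i x_i = b_j$, $x_i \ge 0$); and (ii) consequently $\phi$ restricted to $C$ is a surjection of cones whose fibers over a ray of $D$ are again (lower-dimensional) cones, and the vertices of the fiber over a point $w$ are precisely the vectors $\sum_j (b_j/c_{\beta(j)}) e_{\beta(j)}$ with $\alpha(\beta(j)) = j$.

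For the inclusion $\mathcal{B} \subseteq \mathrm{ExtRays}(D)$: take an extreme ray $\rho = \sum_j a_j e_j$ of $C$ and a choice function $\beta$, giving $w = \phi(\rho) = \sum_j a_j e_j$-image; concretely $w$ has $j$-coordinate $\sum_{i : \alpha(i)=j} c_i a_i$. I would argue $w$ is an extreme ray of $D$: if $w = w_1 + w_2$ with $w_1, w_2 \in D$, pull each $w_\ell$ back along the product-of-simplices fiber to a vector in $C$ by splitting each coordinate of $\rho$ proportionally; this realizes $\rho = \rho_1 + \rho_2$ in $C$, and extremality of $\rho$ forces $\rho_1, \rho_2$ proportional to $\rho$, hence $w_1, w_2$ proportional to $w$. (One must check the split can be done so that $\rho_1, \rho_2 \in C = \rr^n_{\ge 0} \cap \phi^{-1}(D)$, which holds because $\phi(\rho_\ell) = w_\ell \in D$ by construction and nonnegativity is preserved.) Then I would observe the procedure vector $\sum_j (a_j/c_{\beta(j)}) e_{\beta(j)}$ — wait: reading the procedure carefully, the output vector named in the statement is $\sum_j (a_j/c_{\beta(j)}) e_{\beta(j)}$ lives in $\rr^m$ with $j$-th coordinate $a_j / c_{\beta(j)}$; but $\alpha(\beta(j)) = j$ so $c_{\beta(j)}$ is the scaling on that incoming coordinate, and one checks $\phi$ of the honest preimage recovers $w$ up to the ray. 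So each $\mathcal{B}$-vector is a positive multiple of such a $w$, hence an extreme ray of $D$.

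For the reverse inclusion $\mathrm{ExtRays}(D) \subseteq \mathcal{B}$: let $w$ be an extreme ray of $D$. Pick any vertex $u$ of the fiber polytope $\phi^{-1}(w) \cap \rr^n_{\ge 0}$ inside $C$ (nonempty by the surjectivity hypothesis $\phi(C) = D$); by fact (i) this vertex has at most one nonzero coordinate over each $j$, i.e. $u = \sum_j (w_j/c_{\beta(j)}) e_{\beta(j)}$ for some $\beta$ with $\alpha \circ \beta = \mathrm{id}$. I claim $u$ is an extreme ray of $C$: if $u = u_1 + u_2$ in $C$, apply $\phi$ to get $w = \phi(u_1) + \phi(u_2)$ in $D$, so by extremality $\phi(u_1), \phi(u_2)$ are multiples of $w$; then $u_1, u_2$ both lie in the same fiber-simplex product as $u$ (rescaled), and since $u$ was a \emph{vertex} of that product, $u_1, u_2$ must be proportional to $u$. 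Hence $u \in V$, and running the procedure on the extreme ray $u$ with this same $\beta$ returns a positive multiple of $w$, so $w \in \mathcal{B}$.

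The main obstacle I expect is the fiber-vertex bookkeeping in the reverse inclusion — specifically verifying cleanly that a vertex of the fiber over an extreme ray is itself an extreme ray of $C$, and conversely that $\phi$ sends extreme rays of $C$ to extreme rays of $D$ (this direction secretly uses that $C$ is cut out as $\rr^n_{\ge 0} \cap \phi^{-1}(D)$ and not something smaller, so that the proportional coordinate-splitting of a decomposition in $D$ genuinely lands back in $C$). The coordinate-substitution hypothesis makes all fibers products of simplices, which is what keeps ``vertex of the fiber'' combinatorially transparent, but the interplay between extremality upstairs and downstairs is where the argument has to be stated with care rather than waved through.
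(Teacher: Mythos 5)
Your core argument is the same as the paper's, and where it is mathematically substantive it is correct: the first inclusion is exactly the paper's key step (a decomposition $\phi(\rho)=w_1+w_2$ downstairs is pulled back by splitting each coordinate of $\rho$ proportionally, and the hypothesis $C=\rr^n_{\geq 0}\cap\phi^{-1}(D)$ is what guarantees the split pieces land in $C$; the paper runs this as a contradiction, you run it directly), and your second inclusion — that a vertex of the product-of-simplices fiber over an extreme ray of $D$ is itself an extreme ray of $C$ — is a careful write-up of a step the paper asserts without proof ("the extreme rays of $C$ are precisely the vertices of the polytopes $\phi^{-1}(v)\cap\rr^n_{\geq 0}$").

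The one genuine flaw is the passage where you pause ("wait...") and conclude that the procedure's output lives in $\rr^m$ and is "a positive multiple of such a $w$, hence an extreme ray of $D$." That sentence is false: since $\beta:[m]\to[n]$ is a section of $\alpha$, the vector $\sum_j (a_j/c_{\beta(j)})\,e_{\beta(j)}$ lies in $\rr^n$, supported on one coordinate in each fiber $\alpha^{-1}(j)$, and it is a vertex of $\phi^{-1}\bigl(\sum_j a_j e_j\bigr)\cap\rr^n_{\geq 0}$ — i.e.\ a candidate extreme ray of $C$, not of $D$. The confusion is partly the statement's fault: as printed, the indices only parse if $V$ is the set of extreme rays of $D$ (vectors indexed by $[m]$) and the procedure's outputs are the extreme rays of $C$; this swapped reading is also the one used later in the proof of Theorem \ref{thm:extremerays}, where the lemma describes the vertices of $A(C,C_u)^{-1}({\bf 1}^u)$ and $A(C,C_t)^{-1}(e(\lambda,\lambda'))$. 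With the roles of $C$ and $D$ relabeled accordingly, your two inclusions become exactly the two halves of the intended equivalence (extreme rays of $C$ map onto extreme rays of $D$; vertices of fibers over extreme rays of $D$ are extreme rays of $C$), so the fix is purely one of bookkeeping: delete the "positive multiple of $w$" claim, state which space each object inhabits, and your argument coincides with — and slightly completes — the paper's proof.
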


\begin{proof}
It suffices to  show that under the hypotheses of the Lemma, every extreme
ray of $C$ maps onto an extreme ray of $D$.  Indeed, if that is the case,
the extreme rays of $C$ are precisely the vertices of the polytopes
$\phi^{-1}(v) \cap \rr^n_{\geq 0}$ as $v$ ranges over the extreme rays of
$V$.  Note that since $\phi$ is a coordinate substitution
$\phi^{-1}(v)$ is isomorphic to a product of simplices, the simplices
being defined over coordinate subsets over the form $\alpha^{-1}(j)$.
The vertices of these products of simplices have the form of the 
statement of the Lemma.

Hence, it suffices to show the claim that every extreme ray of $C$
maps onto an extreme ray of $D$.  So suppose that
$v'$ is an extreme ray of $C$ such that $\phi(v') = v$ is 
not an extreme ray of $D$.  Then there exists $w, u \in D$, not
equal to $v$ such that $v = w + u$.  Using these vectors, 
we construct $w', u' \in C$ not equal to $v'$ such that $v' = w' +u'$.
For each $i$ such that $\alpha(i) = j$ define
$$w'_i  =  \frac{w_j}{v_j} v'_i \quad \quad \mbox{and} \quad \quad
u'_i  =  \frac{u_j}{v_j} v'_i.$$
Clearly with this choice, we have $ v' = w' + u'$ since $v_j = w_j + u_j$, and both $w'$ and 
$u'$ consist of nonnegative vectors.  Also, since $w, u$ not equal $v$, neither are
$w', u'$ equal to $v'$.  So we must show
that $\phi(w') = w$ and $\phi(u') = u$.
But 
$$\phi(w')_j =  \sum_{i: \alpha(i)=j}  \frac{w_j}{v_j} c_i =  
\frac{w_j}{v_j} \sum_{i: \alpha(i)=j} c_i
 = \frac{w_j}{v_j}  v_j = w_j.$$
Similarly for $u'$, which completes the proof.
\end{proof}

We now have all the ingredients to prove Theorem \ref{thm:extremerays}.

\begin{proof}[Proof of Theorem \ref{thm:extremerays}]
Let $C = \pi_{s} \lessdot \cdots \lessdot \pi_{t}$.
First of all, note that if $s = t$, then $\mathcal{P}(C)$ is the 
positive orthant in $\rr^{s(s-1)/2}$, whose extreme rays are the standard
unit vectors.

Now assume that $s > t$.  According to Lemma \ref{lem:allones}, the vector 
${\bf 1}^{s}$ is an extreme ray of $\mathcal{P}(C)$.
Letting $C' = \pi_{s-1} \lessdot \cdots \lessdot \pi_{t}$,
Proposition \ref{prop:project}
we see that all other extreme rays of $\mathcal{P}(C)$ can
be obtained by applying Lemma \ref{lemma:coordsub} to the
extreme rays of $\mathcal{P}(C')$.  Repeating this procedure
for the extreme rays of $\mathcal{P}(C)$ that do not 
map to ${\bf 1}^{s-1} \in \mathcal{P}(C')$,
we see that every extreme ray of 
$\mathcal{P}(C)$ besides ${\bf 1}^{s}$ can be obtained
as a vertex of $A(C, C_{u})^{-1}({\bf 1}^{u})$ where
$C_{u} = \pi_{u} \lessdot \cdots \lessdot \pi_{t}$,
plus the vertices of
$A(C, C_{t})^{-1}(e(\lambda_{k},\lambda_{l}))$.

To complete the proof of the theorem
we must analyze the vertices of $A(C, C_{t})^{-1}(e(\lambda_{k},\lambda_{l}))$
and show that the vertices
of
$A(C, C_{u})^{-1}({\bf 1}^{u})$
are precisely the induced vectors  from the traversals of
$\pi_{u}$. For both of these statements, we can use
Lemma \ref{lemma:coordsub}.

Indeed, $A(C, C_{u})$ is the map such that
$$(A(C,C_{u})d)( \lambda, \lambda')
=  \frac{1}{|\lambda| \cdot |\lambda'|}\sum_{x \in\lambda \atop y \in \lambda'} d(x,y).
$$
This implies, by Lemma \ref{lemma:coordsub} that the vertices of 
$$A(C, C_{t})^{-1}(e(\lambda,\lambda'))$$
are $|\lambda| \cdot |\lambda'| e(k,l)$ such that
$k \in \lambda$ and $l \in \lambda'$.  Since
we can ignore the scaling factor $|\lambda| \cdot |\lambda'|$
when describing extreme rays, 
taking the union over 
all pairs $\lambda, \lambda' \in \pi_{t}$, yields the set
of rays $\{ e(k,l) :  k, l \mbox{ are not in the same part of the partition } \pi_{t} \}$ from Theorem \ref{thm:extremerays}.

Similarly, applying Lemma \ref{lemma:coordsub} to 
the map $A(C,C_{u})$ and the vector ${\bf 1}^{u}$
yields the set of induced vectors $v(F)$ associated
to the partition $\pi_{u}$.
Indeed, the coordinate $1$ in the $(\lambda, \lambda')$
position of ${\bf 1}^{u}$ produces
an entry of $|\lambda| \cdot |\lambda'|$
in exactly  on of the positions $d(x,y)$ such that $x \in \lambda,
y \in \lambda'$.  This completes the proof of Theorem \ref{thm:extremerays}.  
\end{proof}

We now show that Theorem \ref{thm:extremerays} implies that the UPGMA cones have exponentially many extreme rays.

\begin{prop}\label{prop:exponentiallymany} The cones $\mathcal{P}(C)$ have exponentially many extreme rays.
\end{prop}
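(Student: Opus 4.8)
The plan is to produce, for each $n$, a single maximal chain $C$ whose cone $\mathcal{P}(C)$ already has exponentially many extreme rays, using the explicit description from Theorem~\ref{thm:extremerays}. By that theorem, the number of extreme rays of $\mathcal{P}(C)$ is at least the number of traversals of any one partition $\pi_i$ appearing in the chain, since distinct traversals produce distinct induced vectors $v(F)$ (they have disjoint supports among the admissible coordinates). So it suffices to exhibit a chain containing a partition $\pi_i$ with exponentially many traversals.

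First I would recall that a partition $\pi = \lambda_1 | \cdots | \lambda_k$ has exactly $\binom{k}{2} \cdot \prod_{\text{pairs } a<b} |\lambda_a||\lambda_b| \big/ (\text{something})$ — more simply, for each of the $\binom{k}{2}$ pairs of blocks $\{\lambda_a,\lambda_b\}$ a traversal chooses one element of $\lambda_a$ and one of $\lambda_b$, so the number of traversals is $\prod_{1 \le a < b \le k} |\lambda_a| \cdot |\lambda_b|$. Taking the partition with two blocks each of size $n/2$, this count is $(n/2)^2$, which is only polynomial; so a two-block partition is not enough. Instead I would use a partition into roughly $n/2$ blocks each of size $2$: then the number of traversals is $\prod_{a<b} 2 \cdot 2 = 4^{\binom{n/2}{2}}$, which is clearly exponential in $n$. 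Such a partition $\pi_{n/2}$ (say $12|34|56|\cdots$) certainly sits inside some maximal chain $C$ of $\Pi_n$ — one just merges singletons two at a time to build it, then keeps merging arbitrarily to reach $12\cdots n$. Then Theorem~\ref{thm:extremerays} (applied to the grounded partial chain that is the initial segment of $C$ up through $\pi_{n/2}$, or directly to $C$ viewed as a maximal chain, which is grounded since it starts at $\pi_n$) guarantees all $4^{\binom{\lfloor n/2\rfloor}{2}}$ induced vectors $v(F)$ for $F$ a traversal of $\pi_{n/2}$ are extreme rays of $\mathcal{P}(C)$.

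The remaining point to nail down is that these induced vectors really are pairwise distinct and really are all counted among the extreme rays — i.e.\ that the union in Theorem~\ref{thm:extremerays} is a union of genuinely distinct rays. This is immediate from the definition of $v(F)$: the support of $v(F)$ is exactly the pair-set $F \subseteq \binom{[n]}{2}$, so two traversals $F \ne F'$ give induced vectors with different supports, hence different rays. Thus $\mathcal{P}(C)$ has at least $4^{\binom{\lfloor n/2 \rfloor}{2}}$ extreme rays, which grows exponentially. Since every such $C$ also determines a tree $T$ and $\mathcal{P}(T) \supseteq \mathcal{P}(C)$ contains all of these rays (and possibly more from other chains), the same bound holds for $\mathcal{P}(T)$.

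The main obstacle, such as it is, is purely bookkeeping: one must be slightly careful that the chosen partition $\pi_{n/2}$ with $\lfloor n/2 \rfloor$ blocks of size $\le 2$ genuinely occurs in a \emph{grounded} partial chain (the hypothesis of Theorem~\ref{thm:extremerays}), and to handle the parity of $n$ (for odd $n$ one block has size $3$ or one extra singleton remains, which only helps the count). I do not expect any real difficulty here — the count $\prod_{a<b}|\lambda_a||\lambda_b|$ for traversals is elementary, and the exponential lower bound $4^{\binom{\lfloor n/2\rfloor}{2}}$ follows at once.
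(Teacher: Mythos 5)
Your argument is internally sound as far as it goes: traversals of a partition in the chain do give extreme rays by Theorem \ref{thm:extremerays}, distinct traversals have distinct supports and hence span distinct rays, and a partition into $\lfloor n/2\rfloor$ blocks of size at most $2$ has exponentially many traversals. But it proves a weaker statement than the one the paper needs. The proposition (see also Theorem \ref{thm:main}: ``each cone \dots exponentially many extreme rays'') is a claim about \emph{every} maximal chain $C$, and your construction only handles chains that happen to pass through a partition of the form $12|34|56|\cdots$. Most chains do not; for example, the chain corresponding to the comb (caterpillar) ranked tree, $1|2|\cdots|n \lessdot 12|3|\cdots|n \lessdot 123|4|\cdots|n \lessdot \cdots$, never contains a partition with many blocks of size $2$, so your argument says nothing about its cone. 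Since the comb chain is exactly the minimizer of the number of extreme rays (Corollary \ref{cor:comb}), it is the crucial case, not an edge case.

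The paper instead works with an arbitrary maximal chain: the number of extreme rays contributed by $\pi_s = \lambda_1^s|\cdots|\lambda_s^s$ is $\prod_{i<j}|\lambda_i^s||\lambda_j^s| = \prod_{i=1}^{s}|\lambda_i^s|^{s-1}$, and summing over $s$ gives a total that is exponential for every chain (e.g.\ the term with $s\approx n/2$ is at least $(n-s+1)^{s-1}\geq 2^{\lfloor n/2\rfloor}$, the minimum being attained by the comb-type partition). Your approach can be repaired with one observation in the same spirit: \emph{every} maximal chain contains $\pi_{n-1}$, which has one doubleton and $n-2$ singletons and therefore $2^{n-2}$ traversals, already an exponential lower bound valid for all $C$. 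As written, though, the restriction to a specially chosen chain is a genuine gap relative to the statement being proved. (A small additional quibble: your parenthetical ``disjoint supports'' should be ``distinct supports,'' as you in fact say correctly later.)
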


\begin{proof}
Given $\pi_{s} = \lambda_{1}^{s} | \cdots | \lambda_{s}^{s}$, the number of traversals is the product of the pairwise products of the cardinalities of the blocks of $\pi_{s}$.  So the number of extreme rays induced by $\pi_{s}$ is 
\[
\prod_{\{i,j\} \subset { [s] \choose 2} }|\lambda_{i}^{s}| | \lambda_{j}^{s}| = \prod_{i = 1}^{s} |\lambda_{i}^{s}|^{s-1}.
\]
Given a maximal chain $C \in \Pi_{n}$, the total number of extreme rays will be 
\[
\sum_{s = 2}^{n}  \prod_{i = 1}^{s} |\lambda_{i}^{s}|^{s-1}
\]
which is exponential.

\end{proof}

Note that Propositions \ref{prop:ineqs}, \ref {prop:O(n^3)}, \ref{prop:exponentiallymany}  and Theorem \ref{thm:extremerays} yield Theorem \ref{thm:main}.


\section{Applications of Theorem \ref{thm:extremerays}}

We use the characterization of the extreme rays
of the cones $\mathcal{P}(C)$ to provide easy geometric applications.  First of
all, in general, the set $\mathcal{P}(T)$ of
all dissimilarity maps for which UPGMA returns a given tree,
is not a convex set in general.  Second,
the partition of the positive orthant into the cones
$\mathcal{P}(C)$ does not have the structure of a 
polyhedral fan, which means cones do not intersect in 
their boundary in an especially nice way.  Thirdly, we show the comb tree topology minimizes the number of rays in a UPGMA cone.

\begin{cor}\label{cor:notconvex}
The UPGMA regions $\mathcal{P}(C)$ are not convex in general.
\end{cor}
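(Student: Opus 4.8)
Theorem~\ref{thm:main} shows that each $\mathcal{P}(C)$ is already a (convex) pointed polyhedral cone, so the assertion is really about the regions $\mathcal{P}(T)=\bigcup_{C}\mathcal{P}(C)$, the union being over the rank functions $C$ of $T$; the point is that such a union need not be convex. The plan is to produce the smallest such example. Take $n=4$ and let $T$ be the balanced tree with cherries $\{1,2\}$ and $\{3,4\}$. Since $|V^{\circ}|=3$, the count $|V^{\circ}|!/\prod_{v}|\mathrm{de}(v)|=3!/(3\cdot 1\cdot 1)=2$ from Section~2 shows $T$ has exactly two rank functions, so $\mathcal{P}(T)=\mathcal{P}(C_{1})\cup\mathcal{P}(C_{2})$ where $C_{1}: 1|2|3|4\lessdot 12|3|4\lessdot 12|34\lessdot 1234$ and $C_{2}: 1|2|3|4\lessdot 1|2|34\lessdot 12|34\lessdot 1234$. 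I will exhibit $d_{1}\in\mathcal{P}(C_{1})$ and $d_{2}\in\mathcal{P}(C_{2})$ whose midpoint lies in neither.

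Order the coordinates as $(d(1,2),d(1,3),d(1,4),d(2,3),d(2,4),d(3,4))$ and take $d_{1}=(0.9,\,1,\,10,\,10,\,10,\,5)$ and $d_{2}=(5,\,1,\,10,\,10,\,10,\,0.9)$. Running the algorithm (or checking Proposition~\ref{prop:ineqs}) confirms $d_{1}\in\mathcal{P}(C_{1})$: the first merge is $\{1,2\}$ since $0.9$ is the unique minimum, the recomputed distances on $12|3|4$ are $d^{3}(12,3)=5.5,\ d^{3}(12,4)=10,\ d^{3}(3,4)=5$, so the next merge is $\{3,4\}$, producing $C_{1}$; the verification of $d_{2}\in\mathcal{P}(C_{2})$ is symmetric. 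The midpoint is $\bar d=\frac{1}{2}(d_{1}+d_{2})=(2.95,\,1,\,10,\,10,\,10,\,2.95)$.

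To finish, I appeal to the facet description of Proposition~\ref{prop:ineqs}. The covering relation $1|2|3|4\lessdot 12|3|4$ contributes the inequality $d(1,2)\le d(1,3)$ to the description of $\mathcal{P}(C_{1})$, but $\bar d(1,2)=2.95>1=\bar d(1,3)$, so $\bar d\notin\mathcal{P}(C_{1})$; similarly $1|2|3|4\lessdot 1|2|34$ contributes $d(3,4)\le d(1,3)$ to the description of $\mathcal{P}(C_{2})$, again violated by $\bar d$. Hence $\bar d\notin\mathcal{P}(C_{1})\cup\mathcal{P}(C_{2})=\mathcal{P}(T)$ while $d_{1},d_{2}\in\mathcal{P}(T)$, so $\mathcal{P}(T)$ is not convex. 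Geometrically the reason is transparent: $\{1,3\}$ is the strict minimum entry of $\bar d$, so UPGMA applied to $\bar d$ (and to every dissimilarity map near it) first merges $\{1,3\}$ and returns a tree with cherry $\{1,3\}$, which is not a cherry of $T$; this also shows $\bar d$ is not in the closure $\mathcal{P}(T)$.

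I do not expect a real obstacle here: the only thing needing care is to pick the four ``large'' entries big enough that $d_{1}$ and $d_{2}$ genuinely force all their merges correctly, while keeping the entry $d(1,3)$ small in both so that it controls the average. It is also worth recording explicitly that $C_{1}$ and $C_{2}$ are the only rank functions of $T$, which is the count above. One could instead bypass Proposition~\ref{prop:ineqs} and argue directly from the closure definition using the neighborhood observation in the previous paragraph.
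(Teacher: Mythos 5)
Your proposal is correct and follows essentially the same route as the paper: for the $n=4$ balanced tree $T=((12)(34))$ you take one point from each of its two chain cones and show the midpoint is forced to merge $\{1,3\}$ first, hence lies outside $\mathcal{P}(T)$. The only difference is cosmetic: the paper chooses its two points to be extreme rays supplied by Theorem~\ref{thm:extremerays} (the corollary is presented as an application of that theorem), whereas you use generic points verified by running the algorithm and the inequalities of Proposition~\ref{prop:ineqs}; both verifications are sound.
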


\begin{proof}
We give an example for $n = 4$.  Let  $T = ((12)(34))$.  Then $\mathcal{P}(T) = \mathcal{P}(C_{1}) \cup \mathcal{P}(C_{2})$ where
\[
C_{1} = 1|2|3|4 \lessdot 3|4|12 \lessdot 12|34 \lessdot 1234
\]
\[
C_{2} =  1|2|3|4 \lessdot 1|2|34 \lessdot 34|12 \lessdot 1234
\]
Now $v_{1} = (0,0,2,2,0,1)$ is an extreme ray of $P(C_{1})$ induced by a traversal of $3|4|12$ and $v_{2} = (1,0,2,2,0,0)$ is an extreme ray of $\mathcal{P}(C_{2})$ induced by a traversal of $1|2|34$.  Let $d$ be the convex combination 
\[
d = \frac{1}{2}v_{1} + \frac{1}{2}v_{2} = \left(\frac{1}{2}, 0, 2, 2, 0, \frac{1}{2} \right)
\]
If $d$ is input into UPGMA, the algorithm will return a tree with either $(1,3)$ or $(2,4)$ as a cherry, so $d$ is not in $\mathcal{P}(T)$.  So, in general, UPGMA regions are not convex unless $\mathcal{P}(T) = \mathcal{P}(C)$ for a single chain $C$ in $\Pi_{n}$.   
\end{proof}

  A \textbf{fan} is a family $\mathcal{F}$  of cones in $\rr^{n}$ such that 
        \begin{enumerate}
        \item if $P \in \mathcal{F}$ then every nonempty face of $P$ is in $\mathcal{F}$
        \item if $P_{1}, P_{2} \in \mathcal{F}$ then $P_{1} \cap P_{2} \in \mathcal{F}$.
        \end{enumerate}
        
        \addvspace{2pc}
  \begin{cor}\label{cor:notfan} The UPGMA cones do not partition $\rr^{n \choose 2}$ into a fan.
\end{cor}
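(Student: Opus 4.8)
The plan is to produce two maximal chains $C_1, C_3 \in \Pi_4$ for which $\mathcal{P}(C_1)\cap\mathcal{P}(C_3)$ is not a face of $\mathcal{P}(C_1)$. Since the intersection of any two cones of a fan is a common face of both (see \cite{Ziegler}), this already shows that the cones $\mathcal{P}(C)$ together with their faces are not a fan, hence the subdivision of the orthant they induce is not one either. I would reuse $C_1 = 1|2|3|4 \lessdot 3|4|12 \lessdot 12|34 \lessdot 1234$ from the proof of Corollary~\ref{cor:notconvex} and pair it with $C_3 = 1|2|3|4 \lessdot 2|4|13 \lessdot 13|24 \lessdot 1234$, whose tree is $((13)(24))$.

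Next I would write out both cones using Proposition~\ref{prop:ineqs}. In the coordinates $d_{12},d_{13},d_{14},d_{23},d_{24},d_{34}$, the cone $\mathcal{P}(C_1)$ is cut out inside the orthant by $d_{12}\le d_{13},d_{14},d_{23},d_{24},d_{34}$ together with $2d_{34}\le d_{13}+d_{23}$ and $2d_{34}\le d_{14}+d_{24}$, while $\mathcal{P}(C_3)$ is cut out by $d_{13}\le d_{12},d_{14},d_{23},d_{24},d_{34}$ together with $2d_{24}\le d_{12}+d_{23}$ and $2d_{24}\le d_{14}+d_{34}$. In the intersection one gets $d_{12}=d_{13}$, an equation defining a facet $F$ of $\mathcal{P}(C_1)$; but $\mathcal{P}(C_1)\cap\mathcal{P}(C_3)$ lies strictly inside $F$, because it also imposes the averaging inequalities coming from the second merge of $C_3$ --- in particular $2d_{24}\le d_{14}+d_{34}$ --- which do not hold throughout $F$.

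To turn ``strictly inside a facet'' into ``not a face'' I would choose a witness point $d^{*}$ in the relative interior of $F$ --- so that $d_{12}=d_{13}$ is the only inequality of $\mathcal{P}(C_1)$ active at $d^{*}$ --- which also makes the $\mathcal{P}(C_3)$-inequality $2d_{24}\le d_{14}+d_{34}$ an equality; the point $d^{*}=(4,4,12,16,9,6)$ does this. Moving $d^{*}$ in the $d_{24}$ direction preserves $d_{12}=d_{13}$, so such perturbations stay in $\mathcal{P}(C_1)$, while increasing $d_{24}$ breaks $2d_{24}\le d_{14}+d_{34}$: explicitly $x=(4,4,12,16,5,6)$ and $y=(4,4,12,16,13,6)$ both lie in $\mathcal{P}(C_1)$, their midpoint is $d^{*}\in\mathcal{P}(C_1)\cap\mathcal{P}(C_3)$, yet $y\notin\mathcal{P}(C_3)$. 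A face of $\mathcal{P}(C_1)$ containing a relative interior point of the segment $[x,y]$ must contain $x$ and $y$, so $\mathcal{P}(C_1)\cap\mathcal{P}(C_3)$ is not a face of $\mathcal{P}(C_1)$, the desired contradiction. The only step that requires care is the construction of $d^{*}$: it must simultaneously be a relative interior point of a facet of $\mathcal{P}(C_1)$ and a boundary point of $\mathcal{P}(C_3)$ for an inequality that is invalid on $\mathcal{P}(C_1)$; everything else is substituting the listed vectors into the inequalities of Proposition~\ref{prop:ineqs}, together with the standard fact that cones of a fan meet in common faces.
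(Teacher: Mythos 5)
Your argument is correct, and I checked the numerics: with the inequality descriptions from Proposition \ref{prop:ineqs}, the point $d^{*}=(4,4,12,16,9,6)$ does lie in $\mathcal{P}(C_{1})\cap\mathcal{P}(C_{3})$ with only $d_{12}=d_{13}$ active among the constraints of $\mathcal{P}(C_{1})$, the points $x=(4,4,12,16,5,6)$ and $y=(4,4,12,16,13,6)$ both satisfy all inequalities of $\mathcal{P}(C_{1})$, their midpoint is $d^{*}$, and $y$ violates $2d_{24}\leq d_{14}+d_{34}$, so $y\notin\mathcal{P}(C_{3})$; since a face of a cone containing the midpoint of a segment inside the cone must contain both endpoints, the intersection is not a face of $\mathcal{P}(C_{1})$, which is exactly what is needed to contradict the fan property. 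However, your route is genuinely different from the paper's. The paper works with the chains $1|2|3|4 \lessdot 3|4|12 \lessdot 4|123 \lessdot 1234$ and $1|2|3|4 \lessdot 2|4|13 \lessdot 4|123 \lessdot 1234$, observes that $(0,0,0,1,1,1)$ spans an extreme ray of the intersection, and then invokes Theorem \ref{thm:extremerays}: every extreme ray of a cone $\mathcal{P}(C)$ is either a unit vector or an induced vector $v(F)$ of a traversal, and a combinatorial count (no partition of $[4]$ has three blocks of equal size) shows no traversal can induce a multiple of $(0,0,0,1,1,1)$, so the intersection cannot be a common face. Your proof instead uses only the facet description of Proposition \ref{prop:ineqs} plus an explicit witness segment, so it is more elementary and entirely verifiable by substitution, at the cost of some ad hoc numerical search for $d^{*}$; the paper's proof is shorter once Theorem \ref{thm:extremerays} is in hand and better illustrates how the extreme-ray characterization constrains which faces can occur, while also noting (as you do) that only $d^{*}$-type witnesses, or respectively the extreme-ray count, carry the real content --- the surrounding reduction to ``cones of a fan meet in common faces'' is the same in both arguments. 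One small point either way: strictly speaking the paper's and your examples both live in $\Pi_{4}$, and the statement for general $n$ is obtained by the obvious embedding; also, your facet and relative-interior framing is not actually needed --- the midpoint argument with $x$, $y$, $d^{*}$ already suffices.
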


\begin{proof}      
        Consider the two chains in $\Pi_{4}$
       \[
       C_{1} = \quad 1|2|3|4 \lessdot 3|4 |12 \lessdot 4| 123 \lessdot 1234
       \] 
          \[
       C_{2} = \quad 1|2|3|4 \lessdot 2|4 |13 \lessdot 4| 123 \lessdot 1234
       \] 
       
       \addvspace{3pc}
 
 \par
The vector $(0,0,0,1,1,1)$ generates an extreme ray of $P(C_{1}) \cap P(C_{2})$.   If $P(C_{1}) \cap P(C_{2})$ was a face of $P(C_{1})$ and $P(C_{2})$, then $(0,0,0,1,1,1)$ would generate a ray of $P(C_{1})$ and $P(C_{2})$.  However by Theorem \ref{thm:extremerays},  extreme rays of $P(C_{1})$ and $P(C_{2})$ must correspond to partitions in $\Pi_{4}$.  Only partitions with 3 blocks induce vectors with 3 nonzero coordinates, and no partition of the set $[4]$ has 3 blocks of equal cardinality. So, no traversal of a partition in $\Pi_{4}$ induces a multiple of $(0,0,0,1,1,1)$. Therefore the UPGMA cones are not a fan.  
\end{proof}

\begin{cor}\label{cor:comb}
For each $n$, the comb tree topology minimizes the number of extreme rays over all UPGMA cones in $\rr^{n \choose 2}$
\end{cor}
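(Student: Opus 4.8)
The plan is to combine the closed-form count of extreme rays coming out of Theorem~\ref{thm:extremerays} with a short extremal estimate on products of block sizes. For a maximal chain $C = \pi_{n} \lessdot \cdots \lessdot \pi_{1}$ with $\pi_{s} = \lambda^{s}_{1} | \cdots | \lambda^{s}_{s}$, Theorem~\ref{thm:extremerays} (taken with $t = 1$, so that there are no rays $e(k,l)$ since $\pi_{1}$ is a single block) identifies the extreme rays of $\mathcal{P}(C)$ with the induced vectors $v(F)$ of the traversals $F$ of $\pi_{2}, \ldots, \pi_{n}$. These vectors are pairwise distinct: a traversal of $\pi_{s}$ has ${s \choose 2}$ elements, so $v(F)$ has exactly ${s \choose 2}$ nonzero coordinates (this distinguishes traversals of $\pi_{s}$ from those of $\pi_{s'}$ for $s \neq s'$), and within a fixed $\pi_{s}$ the support of $v(F)$ is $F$ itself. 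Hence, as in the proof of Proposition~\ref{prop:exponentiallymany}, the number of extreme rays of $\mathcal{P}(C)$ is
\[
  R(C) \;=\; \sum_{s=2}^{n} \ \prod_{i=1}^{s} |\lambda^{s}_{i}|^{\,s-1} \;=\; \sum_{s=2}^{n} \Bigl( \textstyle\prod_{i=1}^{s} |\lambda^{s}_{i}| \Bigr)^{s-1}.
\]

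Next I would prove the elementary fact that any partition of $[n]$ into $s$ blocks, with sizes $a_{1}, \ldots, a_{s} \geq 1$, satisfies $\prod_{i} a_{i} \geq n - s + 1$, with equality precisely when one block has size $n-s+1$ and the other $s-1$ blocks are singletons. This is a one-step smoothing argument: whenever two of the sizes $a \leq b$ satisfy $a \geq 2$, replacing them by $a-1$ and $b+1$ keeps the sum equal to $n$ and strictly decreases the product, because $(a-1)(b+1) = ab - (b-a) - 1 < ab$; iterating drives all but one block down to size $1$. Raising to the power $s-1$ gives $\prod_{i=1}^{s} |\lambda^{s}_{i}|^{\,s-1} \geq (n-s+1)^{s-1}$ for each $s$ and each partition $\pi_{s}$ that can occur in a maximal chain.

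Finally I would observe that the comb topology is realized by the maximal chain whose interior vertices are totally ordered --- hence a chain carrying a unique rank function --- in which $\pi_{s}$ consists of the block $\{1, 2, \ldots, n-s+1\}$ together with the singletons $\{n-s+2\}, \ldots, \{n\}$; each step $\pi_{s} \lessdot \pi_{s-1}$ merges the growing block with one more singleton, so this is a legitimate chain. For it the $s$-th summand of $R(C)$ is exactly $(n-s+1)^{s-1}$, so it meets the lower bound of the previous paragraph term by term, and therefore
\[
  R(\mathrm{comb}) \;=\; \sum_{s=2}^{n} (n-s+1)^{s-1} \;=\; \sum_{k=1}^{n-1} k^{\,n-k} \;\leq\; R(C)
\]
for every maximal chain $C$, which is the assertion of the corollary. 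The only substantive step is the smoothing lemma of the second paragraph; the point that makes the argument close so cleanly --- and the thing worth checking with care --- is that the maximally unbalanced block-size profile is attainable for \emph{every} $s$ simultaneously along a single nested chain, so minimizing the summands of $R(C)$ creates no conflict and no further optimization over chains is needed. (The same reasoning in fact shows that every minimizing chain is a caterpillar chain.)
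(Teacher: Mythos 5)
Your proposal is correct and follows essentially the same route as the paper: count the rays via the traversal formula of Proposition~\ref{prop:exponentiallymany}, show the partition with $s-1$ singletons and one block of size $n-s+1$ minimizes the number of traversals at each level, and note that the comb chain realizes this minimizing shape at every level simultaneously. Your smoothing argument for $\prod_i a_i \geq n-s+1$ is a slightly cleaner justification than the paper's pairwise inequality $xy \geq x+y-1$, but it is the same idea.
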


\begin{proof}
Fix $n$.  We will show that for each $1 \leq s \leq n$, the partitions whose parts have cardinalities $1,1, ..., 1,n - s+ 1$ minimize the number of traversals for all partitions with $s$ parts.  For all integers $x,y > 0$, we have $xy \geq (x + y - 1) (1)$.  
So for $\pi_{s} = \lambda_{1}^{s} | \cdots | \lambda_{s}^{s}$, the number of extreme rays induced by $\pi_{s}$ satisfies
\[
\prod_{\{i,j\} \subset { [s] \choose 2} }|\lambda_{i}^{s}| | \lambda_{j}^{s}| \geq \prod_{\{i,j\} \subset { [s] \choose 2} } (1) (|\lambda_{i}^{s}| + |\lambda_{j}^{s}| -1)
\]
The only type of partition in $\Pi_{n}$ with $s$ parts such that all pairs $\{i,j\} \subset { [s] \choose 2} $ satisfy either $|\lambda_{s}^{i}| = 1$ or $|\lambda_{s}^{j}| = 1$ is the type with $s-1$ singleton parts and one part of size $n - s + 1$.  Therefore partitions of this type minimize the number of associated induced vectors.  
\par
If $C$ is a maximal chain in $\Pi_{n}$ such that every $\pi_{s} $ in $C$ is of this type, then the tree returned by $d \in \mathcal{P}(C)$ has the comb tree topology.  Therefore this tree topology minimizes the number of extreme rays for  the cone $\mathcal{P}(C)$.  
\end{proof}

\section{Spherical Volumes of UPGMA Regions}
A natural way to measure the relative proportion of the region of dissimilarity maps $\mathcal{P}(T)$ returning the tree $T$ in the positive orthant returning a tree is to calculate the ${{n \choose 2} -1}$ dimensional measure of the surface arising as the intersection of the cones $\mathcal{P}(C) \subset \mathcal{P}(T)$ with the unit sphere $S$ in $\rr^{n \choose 2}$.  We refer to this measure as \textit{spherical volume}.

We estimated the spherical volume of UPGMA cones in two ways using Mathematica, polymake \cite{Polymake}, and the software \cite{Huggins}.  For the first method, we sampled points from the positive orthant using a spherical distribution and input the samples into UPGMA, recording which tree the algorithm returned on the input point. The volume of $\mathcal{P}(T)$ is then the fraction of the total sample points returning $T$.  We calculate volumes for $n = 4,5,6,7$ using this method.   

For the second method, we used a Monte Carlo strategy to estimate the surface area of the cones.  For $n = 4,5,6$, we used the software \cite{Huggins} for $n = 4,5,6$.  This software requires as input triangulations of point configurations that we computed using polymake \cite{Polymake}.   For $n = 7$, some triangulations for maximal chains in $\Pi_{7}$ were too large to compute and use.   We used Mathematica to implement a modification of the sampling strategy employed in \cite{NJSphere} along with the UPGMA algorithm.

The basic strategy using Monte Carlo integration to compute
spherical volumes can be described as follows. 
Given a simplicial cone ${\rm cone}(V)$ spanned by vectors $V = v_{1}, \ldots, v_{n}$,
it is easy to generate uniform samples from the
simplex ${\rm conv}(V)$.  The map that takes a point $x \in {\rm conv}(V)$
onto ${\rm cone}(V) \cap S$ is simply $x \rightarrow x/ \|x \|_{2}$.
The spherical volume is then the average value of the Jacobian
of this map.  To calculate the spherical volume of a
cone
$\mathcal{P}(C)$ of a full chain in situations
where we could only compute a triangulation of a cone
from a partial chain $\mathcal{P}(C')$, 
we generate random points from the partial cone $\mathcal{P}(C')$
and compute the average of the product of Jacobian and the indicator
function of lying in the cone $\mathcal{P}(C)$.

We summarize
the results here of those computations for $n = 4,5,6,7$ leaf trees,
only displaying results for the regions $\mathcal{P}(T)$.
In the tables below, we give estimates of the spherical volumes of the regions $\mathcal{P}(T)$.   The column Tree gives the tree in Newick format.  The column \#Chains refers to the number of cones producing the given tree.  The Volume column gives the total volume of all of the cones associated to the given tree, and the Fraction of Orthant column gives the portion of the positive orthant in $\rr^{n \choose 2}$ that returns the given tree \emph{topology} under UPGMA.  

Recall that $\mathcal{P}(T) = \cup \mathcal{P}(C)$ where $C$ ranges over the chains in $\Pi_{n}$ corresponding to $T$.  So, the number of cones associated to a tree $T$ depends on the number of rank functions that $T$ admits.  For example, in the table for $n = 5$, the tree  $T_{2} = (((12)3)(45))$ has 
$4!/(4 \cdot 2 \cdot 1 \cdot 1) = 3$ rank functions, and there are 3 cones in $\mathcal{P}(T_{2})$.  
\par
A more detailed explanation of the volume computations, as well as software and input files, is available at \cite{website}.

\small

 \begin{center}
\begin{tabular}{ | c | p{4cm} | c | c | p{2cm} |}
\hline
  & Tree & \# Chains & Volume &  Fraction of Orthant \\ \hline
 1 & $(((12)3)4)$   & 1 & 0.0238 & 0.5895 \\ \hline
 2 & $ ((12)(34)) $    & 2 & 0.0662 & 0.4099 \\ 
 \hline
\end{tabular} 
\medskip

\end{center}

\bigskip


\begin{center}
\begin{tabular}{ | c | p{4cm} |  c | c | p{2cm} |}
\hline
 & Tree  & \# Chains & Volume &  Fraction of Orthant \\ \hline

  1 & $ ((((12)3)4)5) $   & 1 & $ 8.57 \times 10^{-5}$ & 0.206 \\ 
 \hline
  2 & $(((12)3)(45))$    & 3 & $5.01 \times 10^{-4}$ & 0.604 \\ \hline
  3 & $(((12)(34))5)$    & 2 & $3.14 \times 10^{-4}$ &  0.189 \\ \hline
  \end{tabular}
\end{center}

\bigskip

\begin{center}
\begin{tabular}{ | c | p{4cm}   | c | c | p{2cm} |}
\hline
 & Tree &  \# Chains  & Volume & Fraction of Orthant \\ \hline
1 & (((((12)3)4)5)6)    & 1 & $ 2.05 \times 10^{-8}$ &  0.042 \\ \hline 
2 & ((((12)3)4)(56))    & 4  & $ 2.10 \times 10^{-7}$ & 0.216 \\  \hline  
 3 & ((((12)3)(45))6)    & 3  & $ 2.16*10^{-7}$ & 0.223 \\ \hline
  4 & (((12)3)((45)6))   & 6  & $ 4.5 \times 10^{-7}$ & 0.229 \\ \hline
5 & ((((12)(34))5)6)   & 2  & $1.05 \times 10^{-7}$ &  0.054 \\ \hline

6 & (((12)(34))(56))    & 8  & $ 9.06 \times 10^{-7}$ & 0.231  \\ \hline

\end{tabular}
\end{center}

\bigskip

\begin{center}
\begin{tabular}{ | c | p{4cm}   | c | c | p{2cm} |}
\hline
 & Tree &  \# Chains  & Volume & Fraction of Orthant \\ \hline
1 & ((((((12)3)4)5)6)7)    & 1 & $ 2.75 \times 10^{-13}$ &  0.0050 \\ \hline 
2 & (((((12)3)4)5)(67))  & 5  & $ 4.82 \times 10^{-12}$ & 0.0435 \\  \hline  
 3 & (((((12)3)4)(56))7)   & 4  & $ 6.32 \times10^{-12}$ & 0.0570 \\ \hline
  4 & ((((12)3)4)((56)7))   & 10  & $ 1.95 \times 10^{-11}$ & 0.1762 \\ \hline
5 &   (((((12)3)(45))6)7) & 3  & $4.45 \times 10^{-12}$ &  0.0402 \\ \hline
6 & ((((12)3)(45))(67))    & 15  & $ 5.72  \times 10^{-11}$ & 0.2581  \\ \hline
7 & ((((12)3)((45)6))7)    & 6  & $1.66  \times 10^{-11}$ & 0.0747  \\ \hline
8 & (((12)3)((45)(67)))    & 20 & $9.00  \times 10^{-11}$ & 0.2030  \\ \hline
9 & (((((12)(34))5)6)7)   & 2 & $1.73  \times 10^{-12}$ & 0.0078  \\ \hline
10 & ((((12)(34))5)(67))  & 10 & $2.63  \times 10^{-11}$ & 0.0593  \\ \hline
11 & ((((12)(34))(56))7)  & 8 & $3.33  \times 10^{-11}$ &0.0753  \\ \hline

\end{tabular}
\end{center}

\normalsize

The computations suggest some observations which might hold true
for large $n$.  As we have shown in Corollary \ref{cor:comb},
the cone associated to the single rank function on the comb tree
yields the cone $\mathcal{P}(C)$ with the fewest number of
extreme rays.  Our computations up to $n = 7$ suggests that this 
is also the cone with the smallest spherical volume.
See \cite{website} for those values.  The size of the
region $\mathcal{P}(T)$ appears be roughly proportional 
to the number of chains $C$ that yield the tree $T$
and appears to be smallest for the comb tree.
Furthermore, the relative proportion of the positive
orthant taken up by the comb tree topology appears to
be the smallest.  We predict that these patterns hold
for larger number of taxa as well.


\section*{Acknowledgments}

Ruth Davidson was partially supported by       the US National Science 
Foundation (DMS 0954865).
Seth Sullivant was partially supported by the David and Lucille Packard 
Foundation and the US National Science Foundation (DMS 0954865).

\end{document}